\newtheorem{definition}{Definition}
\tikzstyle{int}=[draw, fill=lightgray, minimum height=1cm, minimum width=1.4cm,text width=1.3cm,align =center]
\definecolor{lightgreen}{RGB} {172, 243, 174}
\definecolor{lightred}{RGB}{250, 107, 132}
\definecolor{lightyellow}{RGB}{255, 255, 167}
\definecolor{lightgrey}{RGB}{220, 220, 220}
\tikzset{cross/.style={cross out, draw=black, ultra thick,fill=none, minimum size=4*(#1-\pgflinewidth), inner sep=0pt, outer sep=0pt}, cross/.default={3.5pt}}
\newcommand\Paragraph[1]{\smallskip \noindent\textbf{#1}.}
  \providecommand\BibTeX{{%
    \normalfont B\kern-0.5em{\scshape i\kern-0.25em b}\kern-0.8em\TeX}}}
\begin{document}

\date{}

\title{The Potential of Self-Regulation for Front-Running Prevention on Decentralized Exchanges}
\author{
{\rm Lioba Heimbach}\\
ETH Zurich\\
hlioba@ethz.ch
\and
{\rm Eric Schertenleib}\\
unaffiliated\\
eric.schertenleib@gmail.com
 \and
 {\rm Roger Wattenhofer}\\
 ETH Zurich\\
wattenhofer@ethz.ch 
} 

\maketitle

\begin{abstract}
The transaction ordering dependency of the smart contracts building \emph{decentralized exchanges (DEXes)} allow for predatory trading strategies. In particular, front-running attacks present a constant risk for traders on DEXes. Whereas legal regulation outlaws most front-running practices in traditional finance, such measures are ineffective in preventing front-running on DEXes. While novel market designs hindering front-running may emerge, it remains unclear whether the market's participants, in particular, liquidity providers, would be willing to adopt these new designs. A misalignment of the participant's private incentives and the market's social incentives can hinder the market from adopting an effective prevention mechanism. 

We present a game-theoretic model to study the behavior of sophisticated traders, retail traders, and liquidity providers in DEXes. Sophisticated traders adjust for front-running attacks, while retail traders do not, likely due to lack of knowledge or irrationality. Our findings show that with less than $1\%$ of order flow from retail traders, traders' and liquidity providers' interests align with the market’s social incentives -- eliminating front-running attacks. However, the benefit from embracing this novel market is often small and may not suffice to entice them. With retail traders making up a larger proportion (around $10\%$) of the order flow, liquidity providers tend to stay in pools that do not protect against front-running. This suggests both educating traders and providing additional incentives for liquidity providers are necessary for market self-regulation.
\end{abstract}

\section{Introduction}

The emergence of \emph{decentralized finance (DeFi)} on Ethereum~\cite{wood2014ethereum} greatly enhanced the interest in cryptocurrency applications. DeFi is a blockchain-based form of finance that utilizes \emph{smart contracts} to offer many traditional financial instruments, but without relying on financial intermediaries. A prime example thereof is \emph{decentralized exchanges (DEXes)}. While traditional exchanges match individual buyers and sellers with the limit order book mechanism, a DEX algorithmically sets the exchange rate for a trade. To this end, DEXes store liquidity for exchanges between individual cryptocurrency pairs in smart contracts, referred to as \emph{liquidity pools}. The trade size and the respective cryptocurrency pair's amount and ratio of reserves control the price. The pool charges a fee for every trade which is proportional to the trade's input amount and distributed pro-rata amongst the pool's \emph{liquidity providers}. 

DEXes are becoming increasingly popular. Yet, the rise of DEXes does not come without caveats, leading to the characterization of the Ethereum peer-to-peer network as a dark forest. Predatory trading bots prey on user transactions in Ethereum's \emph{mempool}, the public waiting area for transactions. Predatory trading schemes exploit the lack of privacy given to transactions prior to their execution. Moreover, the smart contracts that build DEXes are dependent on the transaction order. Generally, these attacks involve the attacker \emph{front-running} a victim's transaction. One of the most frequently observed strategies exploiting this dependency on transaction ordering is the \emph{sandwich attack}~\cite{qin2021quantifying} which we describe in Section~\ref{sec: Preliminaries}. We focus on sandwich attacks in this work, as this is the only front-running attack directly impacting the welfare of liquidity providers.\footnote{Apart from sandwich attacks, there are destructive front-running attacks~\cite{eskandari2019sok}. Thereby, the attacker searches for trades that exploit arbitrage opportunities and front-runs these and is indifferent to whether the victim's transaction executes. The DEXes volume remains unchanged.} Such an attack occurs when a trading bot front- and back-runs a victim's transaction, forcing the trade to execute at an unfavorable price. Between 1 August and 31 August 2024, more than 130,000 transactions were sandwiched on Ethereum blockchain's DEXes~\cite{2022eigenphi}. 

Given the severity of front-running attacks on DEXes, the market is seeking mechanisms that can prevent such attacks. While front-running attacks are outlawed in traditional finance, the anonymity of market participants and the absence of a central authority do not allow for an effective regulatory approach for DEXes. Therefore, they require new innovative solutions to prevent front-running attacks. In response, multiple approaches to prevent front-running and, more generally, transaction reordering manipulation are currently under development~\cite{heimbach2022sok}. These approaches generally ensure that transaction contents are private from the public until an order is agreed upon. Further, some designs are already being adopted~\cite{2021flash,2021eden,2022bloXroute}.

Even though successful market designs preventing front-running attacks increase market efficiency as a whole~\cite{manahov2016front}, there may still be insufficient incentives to adopt these prevention schemes. Liquidity providers, who potentially benefit from the additional trading volume from sandwich attacks, might be reluctant to embrace such a market and, similarly, some traders might be slow to adapt to such new markets. Examples, where the divergence of private and social incentives has led to adoption failure exist in traditional limit order book exchanges~\cite{budisch2019theory}. 

In this work, we develop a game-theoretical model to study whether traders and liquidity providers would embrace DEXes that incorporate a front-running prevention mechanism. Our repeated game consists of two hypothetical liquidity pools, one allowing sandwich attacks and one implementing a sandwich attack prevention scheme. Traders and liquidity providers distribute across the two pools, thereby maximizing their private incentives. The objective of liquidity providers is to maximize their fees earned while the traders seek to execute their orders at the best possible price. Thus, while the presence of sandwich attacks leads to additional trades from the attack, it reduces the volume of ordinary traders as they receive a poor price.

To capture different behavior traders, we distinguish between two types of traders. A portion of the trade orders originates from \textit{sophisticated} traders who are aware of sandwich attacks and adjust their behavior accordingly (fully rational), whereas, another fraction of trades stem from \textit{retail} traders who are either oblivious or indifferent to these attacks. This distinction is made to account for information asymmetry and a degree of irrationality present in the behavior of users.

\Paragraph{Contributions} We summarize our contributions below: 

\begin{itemize}[leftmargin=*,topsep=0pt,itemsep=-1ex,partopsep=1ex,parsep=1ex]
    \item \textit{Small Retail Volume.} We find that a parameter regime exists for which the players utilize the liquidity pool with front-running prevention -- indicating that the market can fix itself. This regime dominates when we have a very small proportion of order flow stemming from retail traders. However, even when the private incentives of liquidity providers align with the market's social incentives, the benefit of embracing the new market can be small. 
    \item \textit{Significant Retail Volume.} When the proportion of retail traders increases, i.e., we observe more irrational behavior from traders, market conditions where the private incentives of liquidity providers and the system's social incentives are misaligned become more common.
    \item \textit{Moving Towards Self-Regulation.} Finally, we highlight the importance of educating traders and point towards additional incentives the market could provide to entice liquidity providers to adjust their strategy. 
\end{itemize}

\section{Related Work}
Front-running has long been prevalent in traditional finance~\cite{BERNHARDT2008front,danthine1998front,comerton2007anonymity} where the regulator is tasked with banning such practices~\cite{markham1988front,moosa2015regulation}. The lack of a central authority in DeFi, however, means that the market must regulate itself. Thus, we study whether the private incentives of market participants obstruct the adoption of innovative market designs preventing front-running. 

Eskandir et al.~\cite{eskandari2019sok} are the first to systematize work surrounding front-running on DeFi. In a similar line of research, Daian et al.~\cite{daian2020flash} study the risks of front-running on DEXes. They observe traditional forms of predatory trading behaviors adapting to the blockchain ecosystem. Park~\cite{park2021conceptual} further shows that the pricing rule of most DEXes gives rise to intrinsically profitable front-running opportunities. By analyzing the market participants' private incentives to prevent front-running, we build on these earlier works and, in particular, study sandwich attacks, as they influence the welfare of traders and liquidity providers. 

The prevalence of front-running attacks on DEXes is first quantified by Qin et al.~\cite{qin2021quantifying}. Zhou et al.~\cite{zhou2020highfrequency} study sandwich attacks both analytically and empirically. Both demonstrate the risk stemming from front-running attacks on DEXes. Our work, on the other hand, focuses on whether the market participant's private incentives are disruptive to the adoption of market designs preventing front-running attacks and the associated risks. 

Recently, many suggestions for DEX front-running prevention schemes have emerged. For a comprehensive overview, we refer the reader to Heimbach and Wattenhofer~\cite{heimbach2022sok}. Their work compares state-of-the-art prevention mechanisms and finds that current schemes do not meet the blockchain ecosystem's requirements. 

We summarise the core ideas behind the most relevant suggestions in the following. The simplest schemes, tune the transaction parameters to prevent specific attacks on specific protocols~\cite{zhou2021a2mm,heimbach2022eliminating}. Further, several suggestions propose that transactions are sent to a trusted third party that is put in charge of ordering the transactions fairly~\cite{2021flash,2021eden,2022openmev,2022gnosis,2022cowswap,bentov2019tesseract,stathakopoulou2021adding,2022ata,2022secretswap}. A parallel line of work, instead of relying on a single entity, trust a generally permissoned committee to order the transactions in a fair manner~\cite{kelkar2020order,baird2016swirlds,kursawe2020wendy,zhang2020byzantine,kelkar2021order,kelkar2021themis,cachin2021quick,reiter1994securely,miller2016honey,asayag2018fair,orda2021enforcing,zhang2022flash,constantinescu2023fair,momeni2023fairblock} --- preventing front-running. Finally, several schemes set the order of transactions by first having users commit to their transactions on-chain and then only revealing the transaction contents later in a second phase~\cite{tatabitovska2021mitigation,breidenbach2018enter,doweck2020multi}. All of these schemes, thus, aim to preserve the privacy of the transaction contents until an execution order is agreed upon. In this work, instead of assessing or designing prevention mechanisms, we study a market with an ideal prevention mechanism to analyze whether the private incentives would steer market participants to accept such a market design.

Budisch et al.~\cite{budisch2019theory} examine the incentives of exchanges to embrace market design innovations that eliminate latency arbitrage and HFT trading. Their work finds that adoption failures arise in traditional limit order book exchanges. In particular, the divergence of private and social incentives hinders the market from accepting new market designs. We study the incentives of liquidity providers in DEXes and show that their private incentives generally align with the system's social interests: demonstrating that liquidity providers can be incentivized to adopt new market designs preventing front-running attacks. 

\section{Preliminaries}
\label{sec: Preliminaries}
In the following, we detail the trading mechanism of the biggest DEXes and introduce the sandwich attack specifics.

\subsection{Automated Market Maker}
As its name suggests, trade execution on \emph{automated market makers (AMMs)} is automatic, and the price is controlled by an algorithm with liquidity being supplied by individual liquidity providers rather than brokers or market makers. A host of AMM variants exist, each with its specific pricing mechanism. We focus on the most widely adopted subclass of AMMs: \emph{constant product market makers (CPMMs)}~\cite{2021defillama}. For each tradeable cryptocurrency pair, the CPMM stores assets of both cryptocurrencies in a liquidity pool. The CPMM then guarantees that the product between the amounts of the two reserved pool currencies stays constant. This property ensures that the price for swapping between these pairs mimics the behavior of a demand curve of a normal good. Both Uniswap and Sushiswap, two of the biggest DEXes, employ the CPMM for pricing. The original CPMM design, as deployed by Uniswap V2~\cite{adams2020uniswap} and Sushiswap~\cite{2021sushiswap}, utilizes the same liquidity for the pool's entire price range. Consider a pool $X\rightleftharpoons Y$ between $X$-tokens and $Y$-tokens with respective reserves $x$ and $y$. Then the pool's marginal price indicating the pool's current price for $X$-token in terms of $Y$-token is $P = y/x$~\cite{adams2020uniswap}. Further, the pool's liquidity is defined as $L=\sqrt{x\cdot y}$. This liquidity needs to support trading along the entire price range $(0,\infty)$ in the original CPMM design.

In the newest Uniswap design (V3)~\cite{adams2021uniswap}, liquidity providers choose the price range $[P_a,P_b]$ for which they provide liquidity. This concentration of liquidity is intended to increase capital efficiency, as the liquidity only needs to support trade execution in the corresponding price range. Liquidity providers can only choose from a discrete set of price range boundaries that are defined by the pool's initialized \emph{ticks}. Between each pair of initialized ticks, the CPMM only needs to maintain enough reserves to support trading between the price boundaries. One can simulate a constant product pool with adjusted larger reserves, referred to as \emph{virtual reserves}, between any pair of neighboring initialized ticks.

For a price range $[P_a,P_b]$ between two neighboring initialized ticks, the \emph{liquidity} inside the tick is given by $L$ and the \emph{marginal price} is $P$. The CPMM ensures that the constant product of the virtual reserves $x$ and $y$ stays constant, i.e.,
$x\cdot y=k =L^2$,
where $k$ is the constant product of the reserves in the considered price interval. As on Uniswap V2, the marginal price is $P= y/x$ and the liquidity is $L= \sqrt{x \cdot y}$~\cite{adams2021uniswap}. Thus, the virtual reserves are then given by 
$x = {L}/{\sqrt{P}} $ and $ y = L \cdot \sqrt{P}$. For the sake of simplicity, we focus on trading between two neighboring initialized ticks and refer to virtual reserves simply as reserves in this work.\footnote{As our analysis focuses on a time frame where the fair market price between $X$- and $Y$-tokens remains constant (cf. Section~\ref{sec:model}), trading is also likely to occur within a small price range and thus will likely remain within one tick. To cover trading across ticks, one can reapply our analysis. Note that trading within a tick on Uniswap V3 is mostly the same as on Uniswap V2.}

The exchange rate received by traders is dependent on their trade size and the number of tokens reserved in the liquidity pool. Consider, again, a liquidity pool between $X$-token and $Y$-token, $X\rightleftharpoons Y$. We denote the respective initial (virtual) reserves prior to any trading as $x$ and $y$, respectively, and use $\xi$ and $\upsilon$ for the reserves of $X$ and $Y$ at any time during the trading process. Thus, if a trader adds an infinitesimal amount $d\xi$ to the pool, the following amount $d\upsilon$ of $Y$ is extracted where
\begin{equation*}
   d\upsilon=-\dfrac{xy}{\xi^2} d\xi.
\end{equation*}
This expression follows directly from the constant product property. Note that the sign convention we choose is relative to the pool, i.e., $\Delta \upsilon<0$ for a trader buying $Y$-tokens. Further, observe that the price per $Y$-token increases with the input amount. Thus, traders have to pay more per desired token the larger their trade is, resulting in \emph{expected slippage} which is the difference between the pool's marginal price and the actual price received by the trader. Note that the expected slippage is lower in more liquid pools, i.e., those with larger stored reserves. 

From the infinitesimal price, it follows that a trader wishing to sell $\delta _x$ $X$-tokens will receive $\delta _y$ $Y$-tokens, where
\begin{align*}
    \delta _{y}&= -\int_ x ^{x+ (1-f)\delta_x } \dfrac{-x \cdot y}{\xi ^2} d\xi \\ &=   y - \frac{x\cdot y}{x+(1-f)\delta_{x}} = \frac{y(1-f )\delta_{x}}{x+(1-f)\delta_{x}}, 
\end{align*}
and $f$ is the transaction fee which is charged relative to the input amount $\delta_x$ and is distributed pro-rata to the tick's liquidity providers. The sign in front of the integral is negative as the trader receives the $Y$-tokens extracted from the pool. Note that the $(1-f)\delta_x$ in the upper integral bound corresponds to the amount of $X$ added to the pool after deduction of the transaction fee. Thus, post-execution the reserves of $X$ and $Y$ will be $x+(1-f)\delta_x$ $X$-token and $y-\delta_y$ $Y$-tokens.

The time at which a trade executes is unclear to the trader, as their transactions will only be confirmed upon block inclusion. In the meantime, other transactions changing the pool's state might occur. The change in the pool's state introduces a difference between the trader's expected price at the time of submission and the actual price at the time of execution. This price change is known as \emph{unexpected slippage}. To ensure the price of the transaction does not deviate significantly from the expectation, traders specify a \emph{slippage tolerance}, indicating the maximum unexpected price movement they are willing to accept. A trade expecting $\delta _{y}$ $Y$-tokens at the time of transaction submission will only execute if it receives no less than $(1-s)\delta _{y}$ $Y$-tokens, where $s$ is the slippage tolerance. Typical slippage tolerances are $s<0.03$~\cite{Wang2022impact}. 

\subsection{Sandwich Attacks}

A too small slippage tolerance results in frequent transaction failure. However, the slippage tolerance also gives an opening for sandwich attacks. On Ethereum, users broadcast their transactions to the network. The transaction waits in the mempool until it is included in a block by a validator. During this time, the transaction is visible to predatory trading bots and runs the risk of being front- and back-run as part of a sandwich attack. Predatory trading bots scan the mempool's inflowing transaction stream searching for profitable sandwich attack opportunities.

As validators\footnote{Note that currently most blocks are built through the proposer-builder separation scheme, where block building is outsourced to the specialized builders~\cite{PBSEthereum2023}. The same reasoning we detail for the validator also applies to these builders.} control the ordering in a block, sandwich attackers can provide validators with the necessary (financial) incentives to achieve their desired transaction ordering. In fact, \emph{front-running-as-a-service} schemes, such as Flashbots~\cite{2021flash} and Eden network~\cite{2021eden}, facilitate this interaction between sandwich attackers and validators. On the other hand, these services can also be used for front-running prevention, but users must deliberately seek them out rather than being truly incorporated into the market design.

\begin{figure*}[t]
\centering
  \begin{subfigure}[t]{0.48\linewidth}
  \centering
    \definecolor{color1}{HTML}{9C7CA5}
\definecolor{color2}{HTML}{A71D31}
\definecolor{color3}{HTML}{9DB17C}
\definecolor{color4}{HTML}{036016}
\begin{tikzpicture}[scale=2]
  \fill [color4!30, opacity=0.6, domain=0.71:1.41, variable=\x]
      (0.71,3.4)
      -- plot ({\x}, {2/\x})
      -- (1.41, 3.4)
      -- cycle;
     \fill [color3!30, opacity=0.6, domain=0.71:1.41, variable=\x]
      (0., 2.8)
      -- plot ({\x}, {2/\x})
      -- (0, 1.41)
      -- cycle; 
  \draw[-stealth,thick] (0, 0) -- (3.5, 0) node[midway, below,sloped] {\small $X$ reserves};
  \draw[-stealth,thick] (0, 0) -- (0, 3.5) node[midway, above,sloped] {\small $Y$ reserves};

  \draw[scale=1,  line width=0.4mm, domain=0.588:3.4, smooth, variable=\x, gray] plot ({\x}, {2/\x});
   \node at (0.71,2.8)[circle,fill,inner sep=1.3pt](A) {} ;
   \node at (1.41,1.41)[circle,fill,inner sep=1.3pt](C) {} ;
   
   \draw [-stealth,  line width=0.4mm] (A) to [bend right]  node[midway,left, inner sep = 7pt] {$T$}(C) ;
   \draw[thick,stealth-stealth] (0.71, 3.05) -- (1.41,3.05) node[midway,above] {$\delta _{x}$};
   
   \draw[thick,stealth-stealth] (0.45, 2.8) -- (0.45,1.41) node[midway,left] {$\delta_{y}$};

\end{tikzpicture}
    \caption{The execution of transaction $T$ without a sandwich attack. The transaction $T$ receives the $Y$-assets at the expected price.} \label{fig:trade0}
\end{subfigure}
    \hfill
  \begin{subfigure}[t]{0.48\linewidth}
  \centering
    \definecolor{color1}{HTML}{EF767A}
\definecolor{color2}{HTML}{456990}
\definecolor{color3}{HTML}{9DB17C}
\definecolor{color4}{HTML}{036016}

\begin{tikzpicture}[scale=2]
      

      
    
     \fill [color3!30, opacity=0.6, domain=0.896:1.594, variable=\x]
      (0,2.219)
      -- plot ({\x}, {2/\x})
      -- (0, 1.247)
      -- cycle;
      
     \fill [color4!30, opacity=0.6, domain=0.896:1.594, variable=\x]
      (0.896,3.4)
      -- plot ({\x}, {2/\x})
      -- (1.594, 3.4)
      -- cycle;
     \fill [color1!30, opacity=0.6, domain=1.088:1.594, variable=\x]
      (1.088, 0)
      -- plot ({\x}, {2/\x})
      -- (1.594, 0)
      -- cycle;
      
    \fill [color1!30, opacity=0.6,domain=0.71:0.896, variable=\x]
      (0.71, 0)
      -- plot ({\x}, {2/\x})
      -- (0.896, 0)
      -- cycle;

    \fill [color2!30, opacity=0.6, domain=0.71:0.896, variable=\x]
      (3.4,2.8)
      -- plot ({\x}, {2/\x})
      -- (3.4, 2.219)
      -- cycle;
      
    \fill [color2!30, opacity=0.6,domain=1.088:1.594, variable=\x]
      (3.4,1.827)
      -- plot ({\x}, {2/\x})
      -- (3.4, 1.247)
      -- cycle;

  \draw[-stealth, thick] (0, 0) -- (3.5, 0) node[midway, below,sloped] {\small $X$ reserves};
  \draw[-stealth, thick] (0, 0) -- (0, 3.5) node[midway, above,sloped] {\small $Y$ reserves};
  \draw[scale=1,line width=0.4mm, domain=0.588:3.4, smooth, variable=\x, gray] plot ({\x}, {2/\x});
   \node at (0.71,2.8)[circle,fill,inner sep=1.3pt](A) {} ;
   \node at (0.896,2.219)[circle,fill,inner sep=1.3pt](C) {} ;
   
   \node at (1.594,1.247)[circle,fill,inner sep=1.3pt](B) {} ;
   \node at (1.088,1.827)[circle,fill,inner sep=1.3pt](D) {} ;
   \draw [-stealth,line width=0.4mm] (A) to [bend right] node[midway,left] {$A_F$} (C) ;
   \draw [-stealth,line width=0.4mm] (C) to [bend right] node[pos =0.58,left,inner sep=6pt] {$T$}(B) ;
   \draw [-stealth, line width=0.4mm] (B) to [bend right] node[pos = 0.4,right,inner sep=6pt] {$A_B$}(D) ;

   \draw[thick,stealth-stealth] (3, 1.247) -- (3,1.827) node[midway,right] {$a_y$};

   \draw[thick,stealth-stealth] (3, 2.219) -- (3,2.8) node[midway,right] {$a_y$};

   \draw[thick,stealth-stealth] (0.71, 0.5) -- (0.896,0.5) node[midway,above] {$a_x^{\text{in}}$};
   
   \draw[thick,stealth-stealth] (1.088, 0.5) -- (1.594,0.5) node[midway,above] {$a_x^{\text{out}}$};
   
   \draw[thick,stealth-stealth] (0.896, 3.05) -- (1.594,3.05) node[midway,above] {$\delta _{x}$};
   
   \draw[thick,stealth-stealth] (0.45, 2.219) -- (0.45,1.247) node[midway,left] {$\tilde{\delta} _{y}$};
   
\end{tikzpicture}
    \caption{The execution of transaction $T$ with a sandwich attack. Transaction $T$ is first front-run by transaction $A_F$ and then back-run by transaction $A_B$.} \label{fig:trade1}
  \end{subfigure}\vspace{-5pt}
  
\caption{Execution of victim transaction $T$ in pool $X \rightleftharpoons Y$. without (cf. Figure~\ref{fig:trade0}) and with (cf. Figure~\ref{fig:trade1}) sandwich attack. In the presence of an attack the trader receives fewer Y-tokens $\tilde{\delta}_y<\delta_y$ while the attacker makes a profit, i.e., $a_x^\text{in}<a_x^\text{out}$.} \label{fig:trade}
\end{figure*}

A sandwich attack involves the attacker front-running the victim's transaction, exchanging $X$-token for $Y$-token in transaction $A_F$. The attacker's front-running transaction purchases the same asset as the victim: $Y$-token. Thereby, the attacker drives up the asset $Y$'s price. The following victim transaction $T$ then buys $Y$-token at a higher price and further inflates $Y$'s price. To conclude the attack, the attacker back-runs the victim's transaction, selling $Y$-assets at the inflated price with transaction $A_B$. 

To provide a conceptual understanding of sandwich attacks, we visualize a victim's trade $T$ without a sandwich attack in Figure~\ref{fig:trade0}. Figure~\ref{fig:trade1} then shows how a sandwich attack alters the transaction $T$. We observe that without the sandwich attack, the victim expects a greater output $\delta _y$ (cf. Figure~\ref{fig:trade0}) than the output $\tilde{\delta}_y$ it receives in the presence of a sandwich attack (cf. Figure~\ref{fig:trade1}). The attacker's front-running inflates $Y$-asset's price. Further, we observe that the attacker's output $a_x^{\text{out}}$ of the back-running transaction $A_B$ exceeds the attacker's input $a_x^{\text{in}}$ (cf. Figure~\ref{fig:trade1}). The difference $a_x^{\text{out}}-a_x^{\text{in}}$ presents the attacker's profit, as the attacker's input $a_y$ to transaction $A_B$ is the output of transaction $A_F$. 

Lastly, we note that at first glance liquidity providers appear to benefit from sandwich attacks as they lead to increased trading volume, and therefore, collected fees. However, traders aware of this threat could reduce their trading activity, as they receive a worse price than the market price if they fall victim to the attack. We will study this interplay by analyzing the effects of sandwich attacks on the utility of both traders and liquidity providers.

\section{Model}\label{sec:model}
We model a system with two liquidity pools $\text{Pool}_N$ and $\text{Pool}_W$. Both pools facilitate exchanges for the same cryptocurrency pair: $X\rightleftharpoons Y$. While a scheme to prevent sandwich attacks is implemented in $\text{Pool}_N$, sandwich attacks are common practice in $\text{Pool}_W$. With our model, we will study whether DEX participants are able to self-regulate and adopt a DEX with front-running prevention in place.

Our model has four types of players: (sophisticated and retail) traders, liquidity providers, sandwich attackers, and price arbitrageurs. Traders and liquidity providers strive to maximize their personal utility (cf. Section~\ref{sec:traderutility} and Section~\ref{sec:lputility}) across two liquidity pools $\text{Pool}_N$ and $\text{Pool}_W$. In maximizing their utilities, sophisticated traders and liquidity providers account for the effects of sandwich attacks and price arbitrageurs. Our model also captures the effects of less sophisticated traders who are oblivious to sandwich attacks. We will call this group retail traders. Further, for liquidity providers, we will also consider the consequences of them being inert. 

Without sandwich attacks, trades in $\text{Pool}_N$ execute at the expected price.\footnote{While it is possible for there to be several trades in a single block, we can assume them to only amount to natural price fluctuations. In the time frame of a block, they can be assumed to be negligible~\cite{heimbach2022eliminating}.} In $\text{Pool}_W$, on the other hand, sandwich attack bots make an attack whenever it is profitable (cf. Section~\ref{sec:sandwichattackmodel}). We denote the fraction of the total liquidity placed in $\text{Pool}_N$ by $p$. Thus, the reserves in $\text{Pool}_N$ are $x_N=p \cdot x$ $X$-tokens and $y_N =p \cdot y$ $Y$-tokens, and $x_W =(1-p) x$ $X$-tokens and $y_W =(1-p) y$ $Y$-tokens in $\text{Pool}_W$. Given the price $P_{X\rightarrow Y} $ of $X$-token, we have
$$P_{X\rightarrow Y} = \dfrac{y}{x}=\dfrac{y_N}{x_N}= \dfrac{p \cdot y}{p\cdot x}=\dfrac{y_W}{x_W}= \dfrac{(1-p)y}{(1-p)x}.$$

We emphasize that the social incentives of our system are to completely adopt $\text{Pool}_N$, the pool without front-running. In the presence of sandwich attacks in $\text{Pool}_W$, the trades of ordinary traders do not execute at the effective market price but rather at an unfavorable rate. Further, we purposefully exclude incentives of sandwich attackers and price arbitrageurs when discussing the system's incentives. Including their incentives would turn the game into a zero-sum game. Thus, in the presence of profitable sandwich attacks and price arbitrages, the remaining market participants (traders and liquidity providers) collectively lose money. 

Further note that throughout, we assume that the transaction fee $f$ ($0<f<1$) is identical in both pools. Additionally, we disregard the \emph{gas fee}, the fee paid to validators for block inclusion on the Ethereum blockchain, for all players in our analysis. The gas fee would add a fixed cost to every trade and liquidity movement. However, for the sake of simplicity and as the gas fee is not part of the CPMM market mechanism itself, we assume it to be zero.

\subsection{Sandwich Attackers}\label{sec:sandwichattackmodel}
Sandwich attackers observe the inflowing transactions in $\text{Pool}_W$. Upon noticing a trade entering the mempool of $\text{Pool}_W$, they compute the maximal input for the sandwich attack and assess the attack's profitability. The maximal input infers the maximal acceptable price movement on the trader, such that the trade still executes. Attackers conduct any such profitable attack. We find the maximal input of a sandwich attack and study their profitability in Section~\ref{sec:sandwichattackprofit}. 

The victim submits an order $T_W$ wishing to exchange $\delta_{x,W}>0$ $X$-tokens in $\text{Pool}_W$ for $Y$-tokens and sets a slippage tolerance $s$. When submitting the trade $T_W$, the victim is estimated to receive $\delta_{y,W}$ $Y$-tokens, i.e., the number of tokens the victim would receive if no other trade is executed beforehand. On the other hand, when a sandwich attack occurs, the attacker front-runs the victim with transaction $A_F$ exchanging $a_{x}^{\text{in}}>0$ $X$-tokens for $a_y$ $Y$-tokens. Now the victim's transaction executes at a worse price. To finish the attack, the attacker exchanges ${a_y}$ $Y$-tokens for $a_{x}^{\text{out}}$ $X$-tokens in the back-running attack transaction $A_B$. 

We define the attacker's utility as their profit:
\vspace{-3pt}
\begin{definition}\label{def:profit}
    The attacker's utility $U^A(\delta_{x,W},f, s, p, x,y)$ is given by $$a_{x}^{\text{out}}(\delta_{x,W},f, s, p, x,y)-a_{x}^{\text{in}}(\delta_{x,W},f, s, p, x,y).$$
    Here, $a_{x}^{\text{in}}(\delta_{x,W},f, s, p, x,y)$ is the input of the front-running transaction and $a_{x}^{\text{out}}(\delta_{x,W},f, s, p, x,y)$ is output of the back-running transaction. 
\end{definition}\vspace{-3pt}

We will assume that if a profitable sandwich attack exists, it executes successfully. A bot must have access to the necessary funds and achieve its desired transaction ordering which can be accomplished through \emph{front-running-as-a-service} platforms such as Flashbots~\cite{2021flash}. These services further guarantee their users that a transaction will only be included in a block if it executed successfully.  Therefore, it is reasonable to assume that profitable sandwich attacks execute successfully. 

\subsection{Price Arbitrageurs}
We consider a time window, during which the external market price between the pools' two cryptocurrencies is constant. Then, price arbitrageurs ensure that the pool's price returns to $P_{X\rightarrow Y} $ after every trade sequence (either an individual victim transaction in $\text{Pool}_N$ or a victim transaction together with a sandwich attack in $\text{Pool}_W$). Thus, price arbitrageurs balance the market after any set of trades to reflect the fair market price. Letting the pool return to its initial state allows us to study the system analytically in the presence of an infinitely long trade flow as opposed to a fixed set of trades.

\subsection{Traders}\label{sec:traderutility}
Our game captures a continuous stream of trade orders from two types of traders: sophisticated and retail. Sophisticated traders are aware and adjust to the presence of sandwich attacks, while retail traders are oblivious to the presence of these attacks, i.e., they trade in the pool as if there were no sandwich attacks. In the continuous stream of trade orders, a proportion $(1-\omega)$ of orders originate from sophisticated traders and a proportion $\omega$ of orders from retail traders. 

In $\text{Pool}_N$, where there are no sandwich attacks, the tokens received by traders equal the expected trade output. On the other hand, in $\text{Pool}_W$ traders experience sandwich attacks which reduce the expected output. Sophisticated traders account for these attacks, while retail traders do not. \footnote{Sophisticated traders assume there to be a sandwich attack for every transaction in $\text{Pool}_W$. As sandwich attacks only execute when they are profitable, there is not always a sandwich attack. However, this is only the case for small transactions (cf. Section~\ref{sec:sandwichattackprofit}) and unrealistic parameter configurations (cf. Section~\ref{sec:equilibria}), and it is, therefore, negligible.}

All traders wish to sell $X$-tokens for $Y$-tokens, as they have a personal use for $Y$-tokens. Thus, the sophisticated trader's \emph{strategy space} is $$S^S = \{(\delta_{x,N},\delta_{x,W}) \vert \delta_{x,N},\delta_{x,W}\in \mathbb{R}^{\geq 0}\},$$ 
while the retail trader's \emph{strategy space} is
$$S^T = \{(\Delta_{x,N},\Delta_{x,W}) \vert \Delta_{x,N},\Delta_{x,W}\in \mathbb{R}^{\geq 0}\},$$
where $\delta_{x,N}$ and $\Delta_{x,N}$ are the respective trade input sizes in $\text{Pool}_N$, whereas $\delta_{x,W}$ and $\Delta_{x,W}$ are the corresponding trade inputs in $\text{Pool}_W$. 

Traders set their trade sizes across both pools to maximize their personal benefit. As the traders have a personal use for $Y$-tokens, they associate a relative benefit $\alpha>0$ with $Y$-tokens. The private benefit associated with the number of $Y$-tokens a trader buys, $\delta_{y,\bullet}$, is thus given by  $(1+\alpha)\delta_{y,\bullet}$ in $\text{Pool}_{\bullet}$. In addition to the benefits traders obtain from the received $Y$-tokens, they associate a cost with the trade's input, which is given by $P_{X\rightarrow Y}\delta_{x,\bullet}$. Here, $\delta_{x,\bullet}$ is the trade input in $X$-tokens, and $P_{X\rightarrow Y}$ is the fair exchange rate from $X$-tokens to $Y$-tokens. Combining the trader benefit and cost in both pools, we obtain their utility for the sophisticated traders in Definition~\ref{def:utilitysophtrader} and for the retail trader in Definition~\ref{def:utilityretailtrader}. Notice that the important difference between the two utilities below is that the sophisticated trader takes the change in the output amount in $\text{Pool}_W$ as a consequence of sandwich attacks into account, while the retail trader does not. We indicate this with the lack of the argument $s$ in Definition~\ref{def:utilityretailtrader}. Further, note that the retail trader's utility $U^R$ can be seen as the \textit{expected} utility, i.e., what the retail trader expects and thus behaves according to. The \textit{realized} utility of the retail trader is lower in the presence of attacks and equivalent to that of the sophisticated trader.

\begin{definition}\label{def:utilitysophtrader}
    The sophisticated trader's utility $U^S(\delta_{x,N},\delta_{x,W},$ $\alpha,f, s, p, x,y)$ for a trade with input $\delta_{x,N}\geq 0$ in $\text{Pool}_N$ and input $\delta_{x,W}\geq 0$ in $\text{Pool}_W$ is given by 
    \begin{align*}
        &(1+\alpha)\delta_{y,N}(f,p,x,y)-\tfrac{y}{x}\delta_{x,N}\\ &+(1+\alpha)\delta_{y,W}(f,p,x,y,s)-\tfrac{y}{x}\delta_{x,W},
    \end{align*}
 
    Here, $\delta_{y,N}(f,p,x,y)$ and $\delta_{y,W}(f,p,x,y,s)$ are the outputs of the trade in each pool.
\end{definition}

\begin{definition}\label{def:utilityretailtrader}
    The retail trader's utility $U^R(\Delta_{x,N},\Delta_{x,W},\alpha,f, p, $ $x,y)$ for a trade with input $\Delta_{x,N}\geq 0$ in $\text{Pool}_N$ and input $\Delta_{x,W}\geq 0$ in $\text{Pool}_W$ is given by 
    \begin{align*}
        &(1+\alpha)\Delta_{y,N}(f,p,x,y)-\tfrac{y}{x}\Delta_{x,N}\\ &+(1+\alpha)\Delta_{y,W}(f,p,x,y)-\tfrac{y}{x}\Delta_{x,W},
    \end{align*}
 
    Here, $\Delta_{y,N}(f,p,x,y)$ and $\Delta_{y,W}(f,p,x,y)$ are the outputs of the trade in each pool.
\end{definition}

Observe that in this framework, trades execute across both pools to maximize the respective trader's utility. Given a distribution on the relative benefit $\alpha$ and slippage tolerance $s$, the trading volume in either pool depends on the pool's reserve, transaction fee, and slippage tolerance. Throughout we assume all traders have the same relative benefit $\alpha$ and slippage tolerance $s$. Later we will also consider distribution on $\alpha$ to capture non-uniformity among traders. Further note that while we focus on trades from $X$ to $Y$, the analysis applies directly in the opposite direction by symmetry.

\subsection{Liquidity Providers}\label{sec:lputility}
Liquidity providers supply reserves to the two pools. Knowledge of the sophisticated and retail trader's utility is assumed for liquidity providers. Further, liquidity providers are aware of the behavior of sandwich attackers and price arbitrageurs. We consider the liquidity providers to be rational, i.e., they optimally place their liquidity across the pools such that they maximize their received fees. The system has $n \in \mathbb{N}$ liquidity providers. Both the number of liquidity providers and the system's total reserves are fixed during the time of this analysis. A liquidity provider $LP_i$ for $i \in [0,\dots,n-1]$ holds a proportion $l_i$ ($0< l_i \leq 1$) of the total liquidity $L=\sqrt{x \cdot y}$, where $x$ and $y$ are the system's total reserves. We note that $\sum _{i=0}^{n-1} l_i=1$. 

Liquidity provider $LP_i$'s \emph{strategy space} is given by all possible distributions of their liquidity across both pools: $$S^{LP}_i = \{(p_i l_i  L,(1-p_i)  l_i  L) \vert 0\leq p_i\leq 1\}.$$ More precisely, a liquidity provider $LP_i$ can choose the proportion $p_i$ of their liquidity in $\text{Pool}_N$. They automatically place the remaining proportion $1-p_i$ of their liquidity in $\text{Pool}_W$. Knowing the distribution of the remaining liquidity $(1-l_i)L$ across the pools and the behavior of the other market participants, the liquidity provider chooses the strategy that maximizes the received fees. 
We define the liquidity provider's utility as the earned fees: 
\begin{definition}\label{def:utilitylp}
    The utility $U^{\text{LP}} (f,\alpha,s,x,y,p_i,l_i)$ of liquidity provider $LP_i$ that places $p_i l_i L$ liquidity in $\text{Pool}_N$ and $(1-p_i) l_i L$  in $\text{Pool}_W$ represents the fees collected in both pools.
\end{definition}

Our game starts with an arbitrary initial liquidity distribution. One after the other, liquidity providers can change their personal liquidity distribution. The system is in a \emph{Nash equilibrium} if no liquidity provider can improve their utility by unilaterally changing their liquidity distribution (strategy). We will loosen the restriction on equilibria and also consider \emph{$\varepsilon$-equilibria}, where a liquidity provider only changes strategy if it increases their utility by a factor greater than $1+\varepsilon$ ($\varepsilon\geq 0$). We analyze the system with this relaxation on equilibria, as inert liquidity providers are unlikely to change strategies for infinitesimal utility increases due to the effort involved. This adjustment allows us to analyze whether the potential private benefits of liquidity providers suffice.

\section{Strategies}
The optimal strategies of sandwich attackers and price arbitrageurs are straightforward. Sandwich attackers always execute the largest possible profitable attack, i.e., the attack inferring the maximal acceptable price movement on the trader (cf. Section~\ref{sec:sandwichattackprofit}), and price arbitrageurs re-balance the market after every trade sequence.

Sophisticated and retail traders set their trade sizes across both pools optimally to maximize their utility, knowing the pools' liquidity, transaction fee, and, in the prior case, the potential presence of sandwich attacks (cf. Section~\ref{sec:tradesize}). Finally, liquidity providers distribute their liquidity to maximize the received fees. Liquidity providers account for the effects their decision to alter the liquidity distribution would have on the trading volume of the other market participants (cf. Section~\ref{sec:liquiditydistribution}). Note that the section's omitted proofs can be found in Appendix~\ref{app:strategies}.

\subsection{Sandwich Attack Profitability}\label{sec:sandwichattackprofit}
A sandwich attacker only executes an attack whenever it is profitable, i.e., when $U^A$ is positive (cf. Definition~\ref{def:profit}). We find that the sandwich attacker's profit for a front-running transaction of size $a^{\text{in}}_{x}$ can be calculated analytically in Lemma~\ref{lem:profit}. The expression is given in Appendix~\ref{app:strategies}.
\begin{restatable}[]{lemma}{profit}\label{lem:profit}
    The sandwich attacker's profit from an attack of size $a^{\text{in}}_{x} $ to the front-running transaction on a victim's transaction $\delta_{x,W}$  can be given analytically.
\end{restatable}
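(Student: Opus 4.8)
The plan is to derive closed-form expressions for $a_y$ (the intermediate output of the front-running trade) and for $a_x^{\text{out}}$ (the output of the back-running trade), and then subtract $a_x^{\text{in}}$ to obtain $U^A$. First I would apply the CPMM trade formula established in the preliminaries to the front-running transaction $A_F$: starting from reserves $(x_W, y_W) = ((1-p)x,(1-p)y)$ and input $a_x^{\text{in}}$, the attacker receives
\begin{equation*}
a_y = \frac{y_W(1-f)a_x^{\text{in}}}{x_W+(1-f)a_x^{\text{in}}},
\end{equation*}
and the pool moves to reserves $\bigl(x_W+(1-f)a_x^{\text{in}},\, y_W - a_y\bigr)$. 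Then I would apply the same formula to the victim's transaction $T_W$ with input $\delta_{x,W}$ on these updated reserves, obtaining the victim's realized output $\tilde\delta_{y,W}$ and the reserves after $T$. Finally, the back-running transaction $A_B$ sells the $a_y$ $Y$-tokens the attacker acquired; applying the CPMM formula once more (now a $Y\to X$ trade, so with the roles of the reserves swapped) on the post-$T$ state yields $a_x^{\text{out}}$ as an explicit rational function of $a_x^{\text{in}}$, $\delta_{x,W}$, $f$, $p$, $x$, $y$. Substituting back and simplifying gives $U^A = a_x^{\text{out}} - a_x^{\text{in}}$ in closed form, which is exactly the claimed analytic expression (to be recorded in Appendix~\ref{app:strategies}).

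The key steps, in order: (i) write the post-$A_F$ reserves; (ii) compose with the victim trade to get post-$T$ reserves; (iii) compose with $A_B$ (a reverse-direction CPMM swap of the known amount $a_y$) to get $a_x^{\text{out}}$; (iv) algebraically simplify $a_x^{\text{out}} - a_x^{\text{in}}$. Because each individual swap is just an application of the already-derived formula $\delta_y = \frac{y(1-f)\delta_x}{x+(1-f)\delta_x}$ (and its $Y\to X$ analogue obtained by symmetry, as the paper notes), the statement reduces to showing that the three-fold composition collapses to a single rational function — which it does, since composing Möbius-type maps in one variable stays rational.

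The main obstacle is not conceptual but bookkeeping: one must carefully track which reserve plays which role at each stage (in particular, the fee is always charged on the input token, so for the $Y\to X$ back-run the factor $(1-f)$ multiplies $a_y$, and the relevant "input reserve" is the $Y$-reserve after $T$), and the intermediate expressions are nested rational functions whose numerators and denominators grow quickly. The simplification in step (iv) is where cancellation must be handled with care to present a compact final formula; I would organize it by first clearing denominators stage-by-stage and factoring out common terms such as $(1-f)$ and the pool-fraction $(1-p)$, rather than expanding everything at once. No optimization over $a_x^{\text{in}}$ is needed here — that is deferred to the determination of the maximal profitable attack size — so the proof is complete once the composed expression is exhibited.
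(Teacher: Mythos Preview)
Your proposal is correct and follows essentially the same approach as the paper: the paper's proof likewise composes the three CPMM swaps (front-run, victim, back-run), taking care that the fee on the back-run is charged on the $Y$-input so that only $(1-f)a_y$ re-enters the pool, and then solves for $a_x^{\text{out}}$ to obtain the closed-form rational expression for $U^A$. The only cosmetic difference is that the paper writes each swap as an integral of the infinitesimal price (equivalent to the closed-form formula you invoke, as established in the preliminaries) rather than applying the formula directly.
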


We will analyze the conditions under which profitable sandwich attacks exist. First, we determine a bound for the victim's trade size $\delta_{x,W}$ such that a profitable sandwich attack exists (cf. Lemma~\ref{lem:smalldelta}). From Lemma~\ref{lem:smalldelta} we can follow that a profitable attack only exists, if the victim's trade size $\delta_{x,W}$ exceeds a fee dependent threshold $$\delta_x^\text{min}=\frac{f(1-p)x}{(1-f)^2}.$$ Hence, only relatively large trades are prone to sandwich attacks.
\begin{restatable}[]{lemma}{smalldelta}\label{lem:smalldelta}
    A sandwich attack of size $a_x^\text{in}$ is only profitable if the trader's transaction size exceeds $$\frac{f((1-p)x+a_x^\text{in}(1-f))}{(1-f)^2}.$$
\end{restatable}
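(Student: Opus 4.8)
Write $X := (1-p)x$ and $Y := (1-p)y$ for the reserves of $\text{Pool}_W$, abbreviate $a := a_x^\text{in}$, $d := \delta_{x,W}$ and $\bar f := 1-f$, and recall the constant--product trade--output relation $\delta_y = y(1-f)\delta_x/\bigl(x+(1-f)\delta_x\bigr)$. The first step is to track the $X$-reserve of $\text{Pool}_W$ through the three transactions of the attack: after the front-running trade $A_F$ the $X$-reserve is $X_1 = X + \bar f a$ and the attacker holds $a_y = Y\bar f a/X_1$ of $Y$; after the victim's trade $T$ the $X$-reserve is $X_2 = X_1 + \bar f d$ and, by the constant product, the $Y$-reserve is $Y_2 = XY/X_2$; finally the back-running trade $A_B$ returns $a_x^\text{out} = X_2\bar f a_y/(Y_2 + \bar f a_y)$ of $X$ to the attacker (this is exactly the profit quantity of Lemma~\ref{lem:profit}). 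Substituting $a_y$ and $Y_2$ into $U^A = a_x^\text{out} - a > 0$ and clearing the positive denominators, the profitability condition collapses to the compact polynomial inequality
\[ \bar f^2\,X_2\,(X_2 - a) \;>\; X\,X_1 . \]

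The second step is to notice that the target threshold lives in the same variables. Since $\bar f d = X_2 - X_1$, the desired bound $d > f X_1/\bar f^2 = f\bigl((1-p)x + a_x^\text{in}(1-f)\bigr)/(1-f)^2$ is equivalent to $\bar f^2 d > f X_1$, hence (using $f + \bar f = 1$) simply to $\bar f X_2 > X_1$. So it suffices to show that profitability forces $\bar f X_2 > X_1$, which I would establish by contraposition. Assume $\bar f X_2 \le X_1$, equivalently $\bar f^2 d \le f X_1 = f(X + \bar f a)$. Then, using $X_2 - a = X - af + \bar f d$,
\[ \bar f(X_2 - a) = \bar f X - f\bar f a + \bar f^2 d \;\le\; \bar f X - f\bar f a + fX + f\bar f a = X , \]
and multiplying the two bounds $\bar f X_2 \le X_1$ and $\bar f(X_2 - a) \le X$ (or observing $\bar f^2 X_2(X_2-a) \le 0 \le X X_1$ in the degenerate case $X_2 \le a$) yields $\bar f^2 X_2(X_2 - a) \le X\,X_1$, contradicting the profitability inequality. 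Hence a profitable attack of front-running size $a_x^\text{in}$ requires $\bar f X_2 > X_1$, i.e. $\delta_{x,W} > f\bigl((1-p)x + a_x^\text{in}(1-f)\bigr)/(1-f)^2$, as claimed.

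I expect the main obstacle to be the first step: correctly composing the three fee-adjusted constant-product trades and simplifying $U^A > 0$ down to $\bar f^2 X_2(X_2 - a) > X X_1$ without sign errors. The remainder is then driven entirely by the observation that this threshold is precisely the clean inequality $\bar f X_2 > X_1$, which reduces the finish to the two-line estimate above; recognizing that reformulation is the only genuine cleverness the argument needs.
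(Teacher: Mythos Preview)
Your argument is correct. The paper's own proof is essentially a one-liner: it takes the explicit profit formula from Lemma~\ref{lem:profit} and says to solve $U^A = 0$ for $\delta_{x,W}$. You instead re-derive the profit condition from scratch in the compact form $\bar f^2 X_2(X_2-a) > X X_1$ and then extract the threshold $\bar f X_2 > X_1$ via contraposition. Your route is more self-contained and makes transparent \emph{why} the threshold takes this shape. It can be tightened, though: the quantity $\bar f^2 X_2(X_2-a) - X X_1$ actually \emph{factors} as $(\bar f X_2 - X_1)(\bar f X_2 + X)$, and since the second factor is always positive, profitability is in fact \emph{equivalent} to $\bar f X_2 > X_1$ --- so the lemma's necessary condition is also sufficient, and your two-inequality contraposition becomes unnecessary. (Incidentally, your auxiliary bound $\bar f(X_2-a) \le X$ already follows in one line from $\bar f X_2 \le X_1$ by subtracting $\bar f a$, since $X_1 - \bar f a = X$.)
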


Next, we explore what limits the attacker's maximum profit to show that it is optimal for sandwich attackers to execute the attack with maximal input size, i.e., the attack that infers the maximal acceptable price movement, as dictated by the slippage tolerance of the trader. In Lemma~\ref{lem:front} we show that the attacker's maximal input $a^{s}_{x}$ for which a victim's transaction still executes can be calculated analytically.
\begin{restatable}[]{lemma}{front}\label{lem:front}
    The sandwich attacker's maximal input, $a^{s}_{x}$, for a transaction exchanging $\delta _{x,W}$ $X$-tokens with slippage tolerance $s$ such that the victim's trade still executes can be given analytically.
\end{restatable}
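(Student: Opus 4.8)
The plan is to pin down $a^{s}_{x}$ as the unique front-running input at which the victim's realized output in $\text{Pool}_W$ exactly hits the slippage floor, and then to solve the resulting equation in closed form. First I would record the reference point: when the victim submits $\delta_{x,W}$, it expects the output $\delta_{y,W}=\tfrac{(1-p)y(1-f)\delta_{x,W}}{(1-p)x+(1-f)\delta_{x,W}}$ evaluated at the pre-attack reserves $((1-p)x,(1-p)y)$ (by the trade-output formula of Section~\ref{sec: Preliminaries}), so the trade executes iff its realized output is at least $(1-s)\delta_{y,W}$. Composing the front-run $A_F$ of input $a^{\text{in}}_{x}$ with the victim's trade — using the model's fee convention, so that after $A_F$ the reserves are $\bigl((1-p)x+(1-f)a^{\text{in}}_{x},\,\tfrac{(1-p)^2 xy}{(1-p)x+(1-f)a^{\text{in}}_{x}}\bigr)$ and the constant product $(1-p)^2xy$ is preserved — yields the realized victim output
\[
\tilde\delta_y=\frac{(1-p)^2 xy\,(1-f)\delta_{x,W}}{\bigl((1-p)x+(1-f)a^{\text{in}}_{x}\bigr)\bigl((1-p)x+(1-f)a^{\text{in}}_{x}+(1-f)\delta_{x,W}\bigr)}.
\]
Since $\tilde\delta_y$ is strictly decreasing in $a^{\text{in}}_{x}$, the maximal admissible input is characterized by the equality $\tilde\delta_y=(1-s)\delta_{y,W}$.

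Next I would introduce the abbreviations $u:=(1-p)x+(1-f)a^{\text{in}}_{x}$ (the $X$-reserve right after the front-run) and $D:=(1-f)\delta_{x,W}$, and plug them into $\tilde\delta_y=(1-s)\delta_{y,W}$. Cancelling the common factor $(1-p)y(1-f)\delta_{x,W}$ and cross-multiplying, the equation collapses to the quadratic
\[
(1-s)u^2+(1-s)D\,u-(1-p)x\bigl((1-p)x+D\bigr)=0,
\]
with positive root $u=\dfrac{-(1-s)D+\sqrt{(1-s)^2D^2+4(1-s)(1-p)x\bigl((1-p)x+D\bigr)}}{2(1-s)}$. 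Inverting the substitution gives $a^{s}_{x}=\dfrac{u-(1-p)x}{1-f}$, which is the claimed analytic expression; it then also feeds directly into the profit formula of Lemma~\ref{lem:profit} to show it is optimal for the attacker to use $a^{s}_{x}$.

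The routine parts are the algebra of composing the two swaps and reducing to the quadratic. The points that actually need care are: (i) the fee bookkeeping — only $(1-f)$ times each input is added to the reserves while the reserve product stays fixed, so that $\delta_{y,W}$ (the submission-time estimate) and $\tilde\delta_y$ (the realized output after $A_F$) are computed consistently with Section~\ref{sec: Preliminaries}; (ii) the monotonicity claim, namely that $\tilde\delta_y$ is strictly decreasing in $a^{\text{in}}_{x}$, so that the boundary equality — not some interior point — identifies the maximum; and (iii) selecting the correct root and verifying $u\ge (1-p)x$, i.e.\ $a^{s}_{x}\ge 0$, which holds exactly in the regime where a profitable attack exists, namely when $\delta_{x,W}$ exceeds the threshold of Lemma~\ref{lem:smalldelta} (the other root is negative and is discarded). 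I expect (iii) — reconciling the sign of the root with the profitability threshold — to be the main obstacle, though it is a short monotonicity/algebra argument rather than a deep one.
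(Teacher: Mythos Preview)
Your approach is essentially the paper's: set $\tilde\delta_{y}=(1-s)\delta_{y,W}$ at the slippage boundary and solve the resulting quadratic in the post-front-run $X$-reserve. The paper writes the outputs as integrals rather than closed-form fractions, but the algebra and the final expression coincide; your explicit monotonicity remark and root selection are in fact more careful than the paper, which simply states the positive root.

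One small correction to your point (iii): the nonnegativity $a^{s}_{x}\ge 0$ (equivalently $u\ge (1-p)x$) does \emph{not} hinge on the profitability threshold of Lemma~\ref{lem:smalldelta}. Plugging $u=(1-p)x$ into your quadratic gives $-s(1-p)x\bigl((1-p)x+D\bigr)\le 0$, so the positive root always satisfies $u\ge (1-p)x$ whenever $s>0$; an admissible front-run exists regardless of whether the full sandwich is profitable. Lemma~\ref{lem:smalldelta} governs the sign of $U^A$, which is a separate question from the existence of $a^{s}_{x}$, so you should drop that linkage (and the optimality remark, which the paper handles outside this lemma).
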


However, for very large slippage tolerances the size of the sandwich attack is limited. To see this we can consider the asymptotic behavior of Lemma~\ref{lem:profit} in the limit of very large attack sizes $a_x^\text{in}$: 
$\lim _{a_x^\text{in}\rightarrow \infty} U^A =\lim _{a_x^\text{in}\rightarrow \infty}-f a_x^\text{in} \rightarrow -\infty.$
We, thus, analyze whether the slippage tolerance or profitability limits the sandwich attack size and plot the sandwich attack size that achieves the maximum profit $U^A$ as a function of the victim's transaction size $\delta_{x,W}$ in Figure~\ref{fig:maxprofit0}. Figure~\ref{fig:maxprofit1} shows the sandwich attacker's maximal input $a^{s}_{x}$ as a function of the victim's transaction size $\delta_{x,W}$ for different slippage tolerances. Note the vast difference in scale, demonstrating that the sandwich attack size is clearly limited by the slippage tolerance. Thus, in realistic market configurations, the sandwich attackers always execute the attack with maximal possible input size $a^{s}_{x}$.

\begin{figure}[t]
    \centering
  \begin{subfigure}[t]{1\linewidth}
  \includegraphics[scale=1]{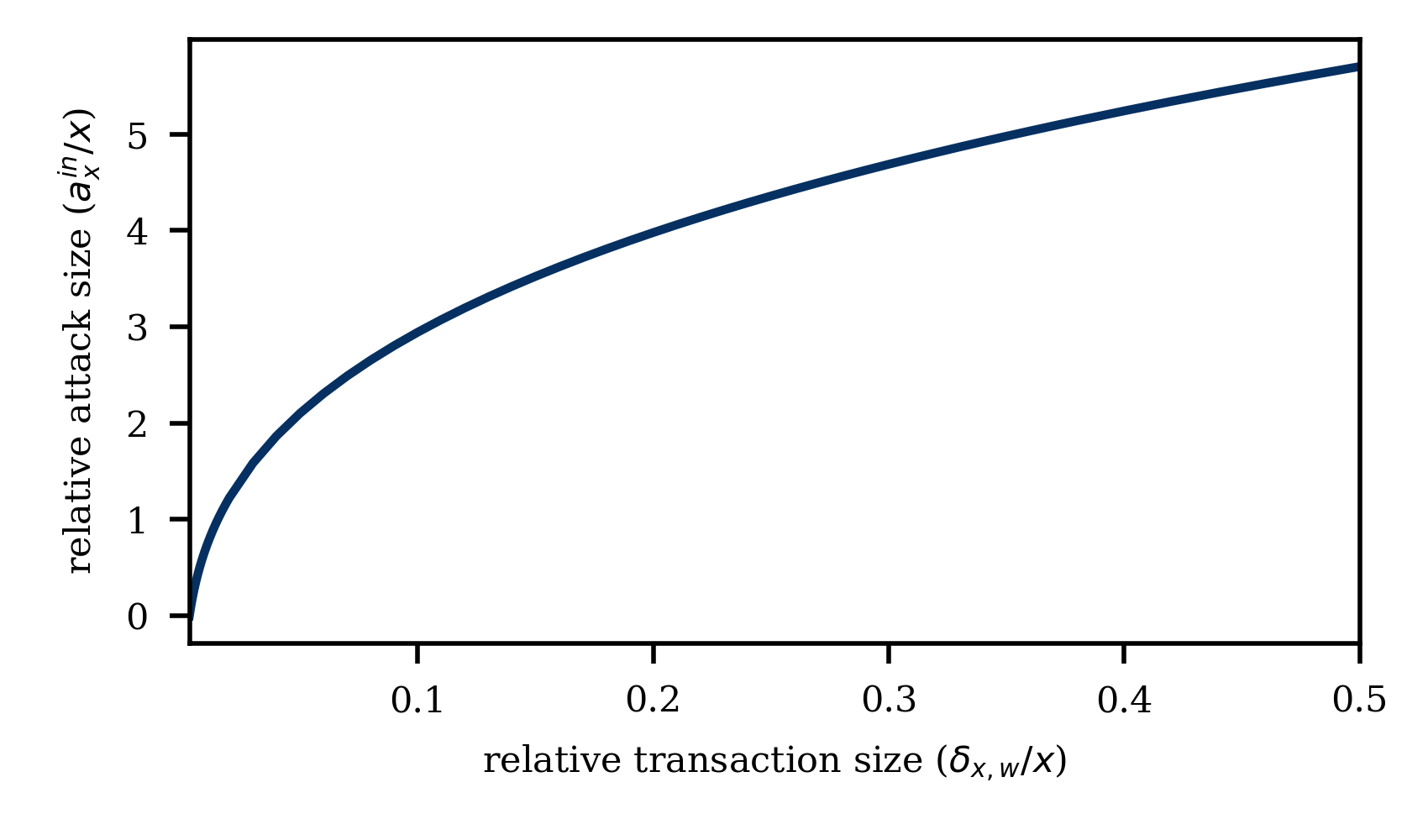}\vspace{-8pt}
  
    \caption{Attack size $a_x^\text{in}$ achieving max profit $U^A$ vs. the victim's trade size. $a_x^\text{in}$ is found by maximizing the attacker's profit w.r.t. $a_x^\text{in}$.} \label{fig:maxprofit0}
  \end{subfigure}%
  
  \begin{subfigure}[t]{1\linewidth}
    \includegraphics[scale=1]{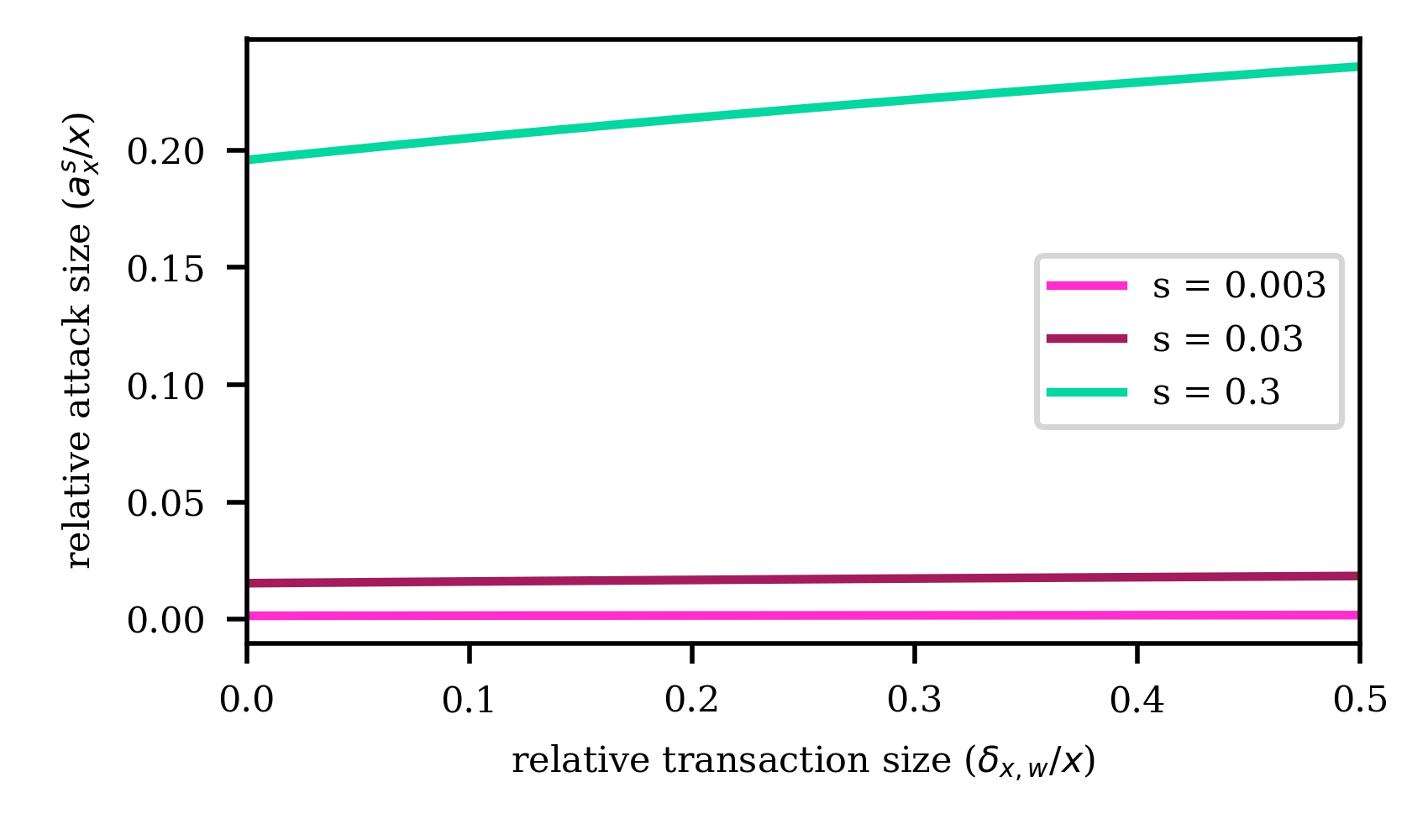}\vspace{-8pt}
    
    \caption{Maximum sandwich attack size dependent on the victim's transaction size and slippage tolerance.} \label{fig:maxprofit1}
  \end{subfigure}\vspace{-5pt}
  
    \caption{Limits on the sandwich attack size in terms of profitability (left) and slippage tolerance (right) for $f=0.3\%$. Note the vast difference in scale of the vertical axis, demonstrating that the attack is limited by the slippage tolerance.}
    \label{fig:maxprofit}
\end{figure}

\subsection{Trade Sizes}\label{sec:tradesize}

Sophisticated traders wish to maximize their utility $U^S(\delta_{x,N},\delta_{x,W},\alpha,f, s, p, x,y)$ (cf. Definition~\ref{def:utilitysophtrader}), i.e., the difference between the benefit from receiving $Y$-tokens and the trade's costs. The utility function accounts for sandwich attacks in $\text{Pool}_W$ and assumes that the transaction output is reduced by the slippage tolerance. Retail traders, on the other hand, wish to maximize their utility $U^R(\Delta_{x,N},\Delta_{x,W},\alpha,f, p, x,y)$ (cf. Definition~\ref{def:utilityretailtrader}) that does not account for the presence of sandwich attacks.

\Paragraph{Sophisticated Trader} We start with sophisticated traders and show in Lemma~\ref{lem:tradesizesoph} that the optimal transaction size, maximizing utility $U^R$, in $\text{Pool}_N$ ($\delta_{x,N} ^{\text{opt}}$) and $\text{Pool}_W$ ($\delta_{x,W} ^{\text{opt}}$) can be expressed analytically. Observe that the transaction size is proportional to the pool's reserves of $X$-token. Further, we can see that the effects of slippage tolerance on the trade size are identical to those of the transaction fee. Thus, the combination of transaction fee $f$ in $\text{Pool}_W$ and the trader's slippage tolerance $s$ is from the trader's perspective equivalent to a larger transaction fee equaling $f+s-f\cdot s$ in $\text{Pool}_N$. Therefore, the transaction size in $\text{Pool}_N$ is always larger than in $\text{Pool}_W$ at the same liquidity level, and we follow that the trader's utility is maximized for $p=1$. We also note that the optimal transaction size increases with $\alpha$ and decreases with the transaction fee $f$, as well as, where applicable, the slippage tolerance $s$.\vspace{-3pt}
\begin{restatable}[]{lemma}{tradesizesoph}\label{lem:tradesizesoph}
    A trade of size 
    $\delta_{x,N} ^{\text{opt}} = \max(0,p\cdot x (\sqrt{(1+\alpha)(1-f)} -1) /(1-f))$ 
    maximizes a sophisticated trader's utility $U^R $ 
    in $\text{Pool}_N$ and in $\text{Pool}_W$ the optimum is at
    $\delta_{x,W} ^{\text{opt}} = \max (0,(1-p) x( \sqrt{(1+\alpha)(1-s)(1-f)}-1) /(1-f))$.
\end{restatable}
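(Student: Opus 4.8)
The plan is to exploit the additive separability of $U^S$ in $(\delta_{x,N},\delta_{x,W})$ to split the problem into two independent one-variable concave maximisations, each solved by its first-order condition. First I would write out $U^S$ explicitly via the CPMM trade-output formula of Section~\ref{sec: Preliminaries}. In $\text{Pool}_N$ the reserves are $x_N=px$, $y_N=py$, so selling $\delta_{x,N}$ yields $\delta_{y,N}=\frac{py(1-f)\delta_{x,N}}{px+(1-f)\delta_{x,N}}$; in $\text{Pool}_W$ the sophisticated trader, who assumes every transaction is maximally sandwiched, treats the received amount as $\delta_{y,W}=(1-s)\frac{(1-p)y(1-f)\delta_{x,W}}{(1-p)x+(1-f)\delta_{x,W}}$, i.e., $(1-s)$ times the honest CPMM output at reserves $(1-p)x,(1-p)y$. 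Because $\delta_{y,N}$ depends only on $\delta_{x,N}$ and $\delta_{y,W}$ only on $\delta_{x,W}$, the utility decouples as $U^S=g_N(\delta_{x,N})+g_W(\delta_{x,W})$ with
\begin{equation*}
 g_N(d)=(1+\alpha)\frac{py(1-f)d}{px+(1-f)d}-\frac{y}{x}\,d ,
\end{equation*}
and $g_W$ obtained from $g_N$ by replacing $p$ with $1-p$ and $(1+\alpha)$ with $(1+\alpha)(1-s)$. Hence $\delta_{x,N}$ and $\delta_{x,W}$ can be optimised separately over $\mathbb{R}^{\geq 0}$.

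Next I would differentiate. The quotient rule gives $g_N'(d)=\frac{(1+\alpha)p^2xy(1-f)}{(px+(1-f)d)^2}-\frac{y}{x}$, which is strictly decreasing in $d$, so $g_N$ is strictly concave on $[0,\infty)$. Setting $g_N'(d)=0$ yields $(px+(1-f)d)^2=(1+\alpha)(1-f)(px)^2$; since $px+(1-f)d>0$ we take the positive root, obtaining the stationary point $d^\star=px\bigl(\sqrt{(1+\alpha)(1-f)}-1\bigr)/(1-f)$. By concavity, if $d^\star\geq 0$ --- equivalently $(1+\alpha)(1-f)\geq 1$ --- then $d^\star$ maximises $g_N$ on $\mathbb{R}^{\geq 0}$; otherwise $g_N'<0$ on all of $[0,\infty)$ and the maximum is attained at the boundary $0$. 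This gives $\delta_{x,N}^{\text{opt}}=\max\bigl(0,\;px(\sqrt{(1+\alpha)(1-f)}-1)/(1-f)\bigr)$, and repeating the identical argument for $g_W$ --- which merely carries the extra factor $(1-s)$ under the square root and uses $(1-p)x$ in place of $px$ --- yields $\delta_{x,W}^{\text{opt}}$ in the stated form.

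The computation is routine; the points requiring care are (i) pinning down the precise form the sophisticated trader ascribes to $\delta_{y,W}$, namely that the slippage tolerance enters multiplicatively as a factor $(1-s)$ on top of the ordinary CPMM output --- which is why $s$ surfaces only under the square root whereas $f$ appears both there and in the denominator --- and (ii) the boundary case behind the $\max(0,\cdot)$, which must be argued from concavity (equivalently, from the monotonicity of $g_N,g_W$ on $[0,\infty)$ when the stationary point is negative) rather than from the first-order condition alone. The remarks in the statement then follow by inspection of the closed forms: the optimal sizes are proportional to the respective pool's $X$-reserves and increase in $\alpha$ while decreasing in $f$ (and, in $\text{Pool}_W$, in $s$); and since $\sqrt{(1+\alpha)(1-s)(1-f)}\leq\sqrt{(1+\alpha)(1-f)}$, the $\text{Pool}_N$ size dominates the $\text{Pool}_W$ size at equal liquidity, so concentrating all liquidity in $\text{Pool}_N$ (i.e., $p=1$) maximises the trader's utility.
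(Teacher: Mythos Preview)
Your proposal is correct and follows essentially the same approach as the paper: separate the two pools, write out each utility via the CPMM output formula (with the $(1-s)$ factor in $\text{Pool}_W$), differentiate, and solve the first-order condition. The only cosmetic difference is that the paper verifies which stationary point is the maximum by checking the sign of the second derivative, whereas you argue directly from the strict monotonicity of $g_N'$ (concavity); these are equivalent.
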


With the help of Lemma~\ref{lem:tradesizesoph}, we can obtain bounds for relative benefit $\alpha$ as a function of the slippage tolerance $s$, such that sophisticated traders benefit from trading in $\text{Pool}_N$ and $\text{Pool}_W$. A trader executes a trade in $\text{Pool}_N$, as long as their $\alpha$ exceeds $$\alpha>\alpha^{\min}_N = \frac{f}{(1-f)}.$$ Notice that this bound only depends on the transaction fee $f$. In $\text{Pool}_W$, a sophisticated trader will only execute a trade if $$\alpha>\alpha^{\min}_W =\frac{f+s-s\cdot f}{((1-f)(1-s))}.$$

\Paragraph{Retail Trader}  In Lemma~\ref{lem:tradesizeretail}, we show the optimal trade sizes for the retail trader (i.e., those that maximize utility $U^R$) in $\text{Pool}_N$ ($\Delta_{x,N} ^{\text{opt}}$) and $\text{Pool}_W$ ($\Delta_{x,W} ^{\text{opt}}$) can also be expressed analytically. Note that the proof for Lemma~\ref{lem:tradesizeretail} is analogous to that of Lemma~\ref{lem:tradesizesoph}. In particular, in $\text{Pool}_N$ the behavior of the retail trader mirrors that of the sophisticated trader (i.e., $\Delta_{x,N} ^{\text{opt}} = \delta_{x,N} ^{\text{opt}} $). In $\text{Pool}_W$, on the other hand, the retail traders, who is oblivious to sandwich attacks, only adjust their trade size in response to differing levels of liquidity relative to $\text{Pool}_N$. Thus, the retail trader's trade size in $\text{Pool}_N$ is equivalent to that in $\text{Pool}_W$ at the same liquidity level, and we follow that the trader's utility is independent of $p$. Importantly, this is only due to the retail trader optimizing their \textit{expected} utility, which ignores the attacks. The \textit{realized} utility, which matches that of the sophisticated trader, is maximized when $p=1$.

\begin{restatable}[]{lemma}{tradesize}\label{lem:tradesizeretail}
    A trade of size 
    $\Delta_{x,N} ^{\text{opt}} = \max(0,p\cdot x (\sqrt{(1+\alpha)(1-f)} -1) /(1-f))$ 
    maximizes a retail trader's utility $U^R $ 
    in $\text{Pool}_N$ and in $\text{Pool}_W$ the optimum is at
    $\Delta_{x,W} ^{\text{opt}} = \max (0,(1-p) x( \sqrt{(1+\alpha)(1-f)}-1) /(1-f))$.
\end{restatable}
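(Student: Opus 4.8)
The plan is to exploit the separability of $U^R$ across the two pools and then perform a one-variable concave maximization in each. First I would substitute the pool reserves $x_N = px$, $y_N = py$ and $x_W = (1-p)x$, $y_W = (1-p)y$ into the CPMM output formula $\delta_y = y(1-f)\delta_x / (x + (1-f)\delta_x)$ derived in the Preliminaries. Because the retail trader is oblivious to sandwich attacks, the output $\Delta_{y,W}$ in $\text{Pool}_W$ is computed with exactly this same formula, just as $\Delta_{y,N}$ is. Consequently $U^R(\Delta_{x,N},\Delta_{x,W},\alpha,f,p,x,y)$ is a sum of two terms, the first depending only on $\Delta_{x,N}$ and the second only on $\Delta_{x,W}$, so the joint maximization over $S^T$ decouples into two independent one-dimensional problems.

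Next, for the $\text{Pool}_N$ term $g_N(\Delta_{x,N}) = (1+\alpha)\,py(1-f)\Delta_{x,N}/(px + (1-f)\Delta_{x,N}) - (y/x)\Delta_{x,N}$, I would note that $\Delta \mapsto \Delta/(c + d\Delta)$ is concave on $[0,\infty)$ for $c,d>0$, so $g_N$ is concave (a concave term minus a linear term). Setting $g_N'(\Delta_{x,N}) = 0$ and clearing denominators reduces to the quadratic $(px + (1-f)\Delta_{x,N})^2 = (1+\alpha)(1-f)(px)^2$; taking the nonnegative root gives the interior critical point $px\big(\sqrt{(1+\alpha)(1-f)} - 1\big)/(1-f)$. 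By concavity this is the unconstrained global maximizer, and when it is negative — equivalently $\alpha \le f/(1-f)$ — the maximizer over $\mathbb{R}^{\geq 0}$ sits at the boundary $0$, which is exactly the content of the $\max(0,\cdot)$ in the statement.

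Then I would repeat the identical argument for the $\text{Pool}_W$ term, which is obtained from $g_N$ simply by replacing $p$ with $1-p$ (and nothing else, since no slippage factor enters the retail trader's perceived output), yielding $\Delta_{x,W}^{\text{opt}} = \max\!\big(0,(1-p)x(\sqrt{(1+\alpha)(1-f)} - 1)/(1-f)\big)$. This is precisely the point at which the argument diverges from — and is in fact simpler than — that of Lemma~\ref{lem:tradesizesoph}: there the $\text{Pool}_W$ output carries the extra factor $(1-s)$, so $(1-f)$ is effectively replaced by $(1-s)(1-f)$ inside the square root, while in $\text{Pool}_N$ the two traders behave identically.

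I do not expect a genuine obstacle; once separability is observed the computation is routine. The one step that needs care is the concavity-plus-boundary argument, since without it the $\max(0,\cdot)$ form is not justified and one must also rule out the critical point being a minimum or inflection. A minor bookkeeping check is that the positive root of the quadratic is the feasible one (the negative root would force $px + (1-f)\Delta_{x,N} < 0$, which is impossible for $\Delta_{x,N}\geq 0$).
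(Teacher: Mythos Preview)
Your proposal is correct and follows essentially the same route as the paper, which simply notes that the proof is analogous to Lemma~\ref{lem:tradesizesoph}: separate the utility across the two pools, solve the first-order condition for each, and take the $\max(0,\cdot)$ to handle the boundary. Your use of global concavity in place of the paper's second-derivative check at the critical points is a cosmetic difference only.
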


Further note that a retail trader executes a trade in $\text{Pool}_N$ and $\text{Pool}_W$, as long as their $\alpha$ exceeds $$\alpha>\alpha^{\min}_N = \frac{f}{(1-f)},$$
which is the same bound for the sophisticated trader in $\text{Pool}_N$.

\subsection{Liquidity Distribution}\label{sec:liquiditydistribution}
A liquidity provider's utility directly corresponds to the received fees (cf. Definition~\ref{def:utilitylp}). We will investigate how liquidity providers distribute their liquidity across the two pools', knowing that sophisticated and retail traders execute their respective optimal transactions. Recall, that in the continuous stream of trade orders that we study, a proportion $(1-\omega)$ stems from sophisticated traders, while a proportion $\omega$ stems from retail traders. Each of these trades is accompanied by a sandwich attack (whenever applicable), and price arbitrage. 

We quantify the system's total fees in Lemma~\ref{lem:totalfee} and that the total fees are proportional to $p$. If the fee gradient with respect to $p$ is zero, all liquidity distributions maximize the game's fees. Otherwise, the game's fees are maximized, either when all liquidity is in $\text{Pool}_N$ ($p=1$) or when all liquidity is in $\text{Pool}_W$ ($p=0$).\vspace{-3pt}
\begin{restatable}[]{lemma}{totalfee}\label{lem:totalfee}
    The total transaction fees $ F(f,\alpha,s,y,p,\omega)$ collected across both pools for retail and sophisticated traders with relative benefit $\alpha$ are proportional to $p$.
\end{restatable}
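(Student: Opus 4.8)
\textbf{Proof proposal for Lemma~\ref{lem:totalfee}.}

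The plan is to compute the total fees collected in each pool directly from the optimal trade sizes established in Lemmas~\ref{lem:tradesizesoph} and~\ref{lem:tradesizeretail}, then add the contribution of the sandwich attacker's transactions in $\text{Pool}_W$, and finally observe that every term carries a single overall factor of either $p$ or $(1-p)$ that can be rewritten as a linear function of $p$. Recall that the fee on a trade with input $\delta_x$ is $f\,\delta_x$ (charged relative to the input), and that price arbitrageurs restore the pool to its initial state after each trade sequence, so the arbitrage trade also contributes a fee proportional to the size of the imbalance it corrects. The key structural fact to exploit is that, by Lemmas~\ref{lem:tradesizesoph} and~\ref{lem:tradesizeretail}, every optimal trade input in $\text{Pool}_N$ is proportional to $x_N = p\,x$ and every optimal trade input in $\text{Pool}_W$ is proportional to $x_W = (1-p)\,x$, with proportionality constants depending only on $\alpha$, $f$, and (in $\text{Pool}_W$, for the sophisticated trader) $s$ — but not on $p$.

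First I would write down the per-trade fee in $\text{Pool}_N$: a fraction $(1-\omega)$ of trades are sophisticated with input $\delta_{x,N}^{\text{opt}}$ and a fraction $\omega$ are retail with input $\Delta_{x,N}^{\text{opt}}$, and by Lemma~\ref{lem:tradesizeretail} these coincide, so the trader-side fee in $\text{Pool}_N$ is $f\,\delta_{x,N}^{\text{opt}}$, which is $p$ times a constant. The arbitrage trade in $\text{Pool}_N$ has size determined by how far the trader pushed the price, which is again a $p$-independent multiple of $p\,x$, so its fee is also $p$ times a constant. Summing, the total $\text{Pool}_N$ fee is $p\cdot C_N(f,\alpha,s,y,\omega)$ for some constant $C_N$ (one may express $x$ via $y$ and the fixed fair price to match the stated argument list $F(f,\alpha,s,y,p,\omega)$). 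Next I would do the analogous computation in $\text{Pool}_W$: here I must account for the sophisticated trader's input $\delta_{x,W}^{\text{opt}}$, the retail trader's input $\Delta_{x,W}^{\text{opt}}$, the attacker's front-running input $a_x^{\text{in}}$ and back-running input $a_y$ (whose analytic forms come from Lemmas~\ref{lem:profit}, \ref{lem:front}), and the arbitrage trade. Each of these, by the cited lemmas, is a $p$-independent multiple of $x_W=(1-p)x$, so the total $\text{Pool}_W$ fee is $(1-p)\cdot C_W(f,\alpha,s,y,\omega)$.

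Adding the two pools gives $F = p\,C_N + (1-p)\,C_W = C_W + p\,(C_N - C_W)$, which is affine — hence proportional in the paper's loose sense of "linear in $p$" — in $p$, with all other dependence absorbed into $C_N$ and $C_W$; this establishes the claim, and as a byproduct shows the maximizer is at $p=1$, $p=0$, or everywhere, according to the sign of $C_N-C_W$. The main obstacle I anticipate is bookkeeping rather than conceptual: one must verify that the attacker's transaction sizes and, in particular, the arbitrage-trade sizes in $\text{Pool}_W$ really do scale linearly with $(1-p)\,x$ with no hidden $p$-dependence creeping in through the slippage-tolerance constraint that caps the attack size (Lemma~\ref{lem:front}) — but since that cap is itself expressed through quantities homogeneous of degree one in the pool reserves, the scaling goes through. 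A secondary care point is the regime where $\alpha$ is below $\alpha_W^{\min}$ so that $\delta_{x,W}^{\text{opt}}=0$ (and hence no attack and no $\text{Pool}_W$ arbitrage); there $C_W$ simply collapses to the retail-only contribution, and the linearity in $p$ is unaffected.
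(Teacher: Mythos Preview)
Your proposal is correct and rests on the same core observation as the paper's proof: the fees collected in $\text{Pool}_N$ factor as $p$ times a $p$-independent quantity, while those in $\text{Pool}_W$ factor as $(1-p)$ times a $p$-independent quantity, so the total is affine in $p$.

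The execution differs in style rather than substance. The paper proceeds by an explicit case analysis --- splitting according to whether $\alpha$ lies below $\alpha_N^{\min}$, between $\alpha_N^{\min}$ and $\alpha_W^{\min}$, or above $\alpha_W^{\min}$, and further according to whether the sandwich attack is profitable --- and in each case writes out closed-form expressions for the fee contributions (e.g.\ $F_{N,S}$, $F_{W,S}^{U^A_S>0}$, $F_{W,R}^{U^A_R>0}$), verifying by inspection that each carries exactly one factor of $p$ or $(1-p)$. You instead argue by homogeneity: since the optimal trade sizes, the attacker's maximal input $a_x^s$, and the arbitrage trades are all homogeneous of degree one in the respective pool reserves $(p\,x,p\,y)$ or $((1-p)x,(1-p)y)$, the fee on each is automatically a $p$-independent multiple of $p$ or $(1-p)$. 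Your route is cleaner and avoids the heavy algebra; the paper's route has the advantage of producing the explicit fee formulas that are subsequently needed for the numerical plots of $\partial_p F$ and $\Delta_F$ in Section~\ref{sec:equilibria}. Your remark that the slippage cap $a_x^s$ from Lemma~\ref{lem:front} is itself degree-one homogeneous in the reserves is exactly the point that makes the scaling go through, and the paper confirms this implicitly when its expression for $F_{W,S}^{U^A_S>0}$ comes out with a clean $(1-p)y$ prefactor.
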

\begin{proof}
    We consider the fee revenue stemming from the order flow related to sophisticated and retail traders separately. 
    
    We start with sophisticated traders. There, we consider the following four intervals: 
    $$0<\alpha \leq \alpha^{\min}_N\text{, } \alpha^{\min}_N<\alpha<\alpha^{\min}_W\text{, }$$ 
    $$ \alpha>\alpha^{\min}_W \text{ and }U^A_S\leq 0\text{, }\alpha>\alpha^{\min}_W\text{ and }U^A_S\geq 0,$$
    where $U^A_S(\delta_{x,W},f, s, p, x,y)$ is the sandwich attacks' profitability for the trades from sophisticated traders (cf. Definition~\ref{def:profit}). 
    
    Following from Lemma~\ref{lem:tradesizesoph}, we conclude that no trades from sophisticated traders execute in either pool and, thereby, no fees collected for $\alpha \leq \alpha^{\min}_N $.
    
    We continue with the second interval, i.e., $\alpha^{\min}_N <\alpha \leq \alpha^{\min}_W $. Following from Lemma~\ref{lem:tradesizesoph} sophisticated traders exclusively execute trades in $\text{Pool}_N$ on this interval. The fees collected in $\text{Pool}_N$ for a transaction by a sophisticated trader with relative benefit $\alpha$ are
    \begin{align*}
    F_{N,S}(f,\alpha,s,y,p) 
    &=f\left( \delta_{x,N} ^{\text{opt}}\cdot \frac{y}{x} +  \tfrac{p\cdot y (1-f)\delta_{x,N} ^{\text{opt}} }{ p\cdot x +(1-f)\delta_{x,N} ^{\text{opt}}}\right)\\
    &=p \cdot y \cdot f\left(\tfrac{\alpha}{\sqrt{(1+\alpha)(1-f)}}-\tfrac{f}{1-f}\right)
    \end{align*}
    $Y$-tokens. In the previous, $\delta_{x,N} ^{\text{opt}} \cdot \frac{y}{x}$ is the sophisticated trader's transaction size in $Y$-tokens in $\text{Pool}_N$ and $$\tfrac{p\cdot y\delta_{x,N} ^{\text{opt}}}{p \cdot x +(1-f)\delta_{x,N} ^{\text{opt}}}$$ is the size of the price arbitrageur's transaction. 
    
    In the third interval, i.e., $\alpha>\alpha^{\min}_W $ and $U^A_S\leq 0$, sophisticated traders execute trades in both pools (cf. Lemma~\ref{lem:tradesizesoph}). However, there is no profitable sandwich attack, due to the small trade size in $\text{Pool}_W$ (cf. Lemma~\ref{lem:smalldelta}). The fees collected from the sophisticated trader order flow in $\text{Pool}_N$ are again given by $F_{N,S}(f,\alpha,s,y,p)$ but liquidity providers collect additional fees in $\text{Pool}_W$. The fees collected in $\text{Pool}_W$ for a transaction by a sophisticated trader with relative benefit $\alpha$ are given by 
    \begin{align*}
    &F_{W,S}^{U^A_S\leq 0}(f,\alpha,s,y,p) \\
    &=f\left( \delta_{x,W} ^{\text{opt}}\cdot \frac{y}{x} +  \frac{(1-p) y (1-f)\delta_{x,W} ^{\text{opt}} }{ (1-p)x +(1-f)\delta_{x,W} ^{\text{opt}}}\right)\\
    &=(1-p)  y \cdot f\left( \tfrac{(n_1(f,\alpha,s) -1)(1-f+n_1(f,\alpha,s) ))}{(1-f)n_1(f,\alpha,s) }\right)
    \end{align*}
    where $$ \delta_{x,W} ^{\text{opt}} \cdot \frac{y}{x}$$ is the trader's transaction size in $Y$-tokens in $\text{Pool}_W$ and $$\tfrac{p\cdot y \cdot \delta_{x,W} ^{\text{opt}}}{p \cdot x +(1-f)\delta_{x,W} ^{\text{opt}}}$$ is the size of the price arbitrageur's transaction that returns the pools to its initial state. Further 
    $$ n_1(f,\alpha,s)= \sqrt{(1+\alpha)(1-s)(1-f)}.$$
    
    Finally, we analyze the fourth interval, i.e., $\alpha>\alpha^{\min}_W$ and $U^A_S> 0$. In this interval, sophisticated trades execute in both pools and sandwich attacks execute in $\text{Pool}_W$. Thus, in addition to the fees $ F_{N,S}(f,\alpha,s,y,p)$ collected in $\text{Pool}_N$, we also consider the fees collected in $\text{Pool}_W$ from traders, price arbitrageurs, and sandwich attackers for the liquidity provider utility. In the presence of sandwich attacks, the fees from sophisticated flow in $\text{Pool}_W$ are given by 
    \begin{align*}
    &F_{W,S}^{U^A_S>0}(f,\alpha,s,y,p)\\
    &=\left( \left(\delta_{x,W} ^{\text{opt}}+a_x^s \right)\frac{y}{x} +  \tfrac{(1-p)y (1-f) \left(\delta_{x,W} ^{\text{opt}}+a_x^s \right)}{(1-p)x +(1-f) \left(\delta_{x,W} ^{\text{opt}}+a_x^s \right)}\right)\\
    &= \tfrac{(1-p)  y \cdot f(( n_1(f,\alpha,s)-3)+ n_2(f,\alpha,s))(( n_1(f,\alpha,s)+1-2f)+ n_2(f,\alpha,s))}{1(1-f)(( n_1(f,\alpha,s)-1)+ n_2(f,\alpha,s))} 
    \end{align*}
    where $\left(\delta_{x,W} ^{\text{opt}}+a_x^s \right)\frac{y}{x}$ combines the trader's transaction size in $\text{Pool}_W$ and the bot's front-running transaction size in $Y$-tokens (cf. Lemma~\ref{lem:front}). In the previous, 
    \begin{align*}
        &n_2(f,\alpha,s) = \sqrt{ \tfrac{2 n_1(f,\alpha,s)(1+s) + (1-s) (2+\alpha (1-s) -s)-(1+\alpha)f(1-s)^2}{1-s}}.
    \end{align*}
    Further, 
    $$\tfrac{(1-p)y (1-f) \left(\delta_{x,W} ^{\text{opt}}+a_x^s \right)}{\left((1-p)x +(1-f) \left(\delta_{x,W} ^{\text{opt}}+a_x^s \right)\right)}$$ is the combined size of the attacker's back-running transaction and the transaction that returns the pool to its initial state. 

    Next, we consider the fee revenue from the order flow associated with retail traders. Here, we consider three intervals 
    $$0<\alpha \leq \alpha^{\min}_N\text{, } \alpha>\alpha^{\min}_N \text{ and }U^A_R\leq 0\text{, }\alpha>\alpha^{\min}_N\text{ and }U^A_R\geq 0,$$
    where $U^A_R(\Delta_{x,W},f, s, p, x,y)$ is the sandwich attacks' profitability for the trades from retail traders (cf. Definition~\ref{def:profit}). 

    Following from Lemma~\ref{lem:tradesizeretail}, we conclude that no trades from retail traders execute in either pool and, thereby, no fees collected for $\alpha \leq \alpha^{\min}_N $.
    
    We continue with the second interval, i.e., $\alpha \geq \alpha^{\min}_N\text{ and }U^A_R\leq 0 $. In contrast, to sophisticated traders, for $\alpha \geq \alpha^{\min}_N $, retail traders trade in both pools (cf. Lemma~\ref{lem:tradesizeretail}). Further, in this interval, there is no profitable sandwich attack on the retail flow, due to the small trade size in $\text{Pool}_W$ (cf. Lemma~\ref{lem:smalldelta}). 
    The fees collected from the retail trader order flow in $\text{Pool}_N$ are given by $F_{N,R}(f,\alpha,s,y,p)$ which is equal to the fees collected in $\text{Pool}_N$ by sophisticated traders (cf. Lemma~\ref{lem:tradesizeretail}), i.e., 
    $$F_{N,R}(f,\alpha,s,y,p)=p \cdot y \cdot f\left(\tfrac{\alpha}{\sqrt{(1+\alpha)(1-f)}}-\tfrac{f}{1-f}\right)=:F_N.$$
    Further, liquidity providers collect additional fees in $\text{Pool}_W$. The fees collected in $\text{Pool}_W$ for a transaction by a retail trader with relative benefit $\alpha$ are given by 
    \begin{align*}
    &F_{W,R}^{U^A_R\leq 0}(f,\alpha,s,y,p) \\
    &=f\left( \Delta_{x,W} ^{\text{opt}}\cdot \frac{y}{x} +  \frac{(1-p) y (1-f)\Delta_{x,W} ^{\text{opt}} }{ (1-p)x +(1-f)\Delta_{x,W} ^{\text{opt}}}\right)\\
    &=(1-p)  y \cdot f\left( \tfrac{ f + f\cdot \alpha - \alpha \sqrt{(1-f)(1-\alpha)}}{(1-f)(1-\alpha)}\right).
    \end{align*}
   
    Finally, we analyze the third interval, i.e., $\alpha>\alpha^{\min}_N$ and $U^A_r> 0$, where retail traders trade in both pools and sandwich attacks are profitable in $\text{Pool}_W$. Here, in addition to the fees $ F_{N,R}(f,\alpha,s,y,p)$ collected in $\text{Pool}_N$, we also consider the fees collected in $\text{Pool}_W$ from traders, price arbitrageurs, and sandwich attackers for the liquidity provider utility. In the presence of sandwich attacks, the fees from retail flow in $\text{Pool}_W$ are given by 
    \begin{align*}
    &F_{W,R}^{U^A_R>0}(f,\alpha,s,y,p)\\
    &=\left( \left(\Delta_{x,W} ^{\text{opt}}+a_x^s \right)\frac{y}{x} +  \tfrac{(1-p)y (1-f) \left(\Delta_{x,W} ^{\text{opt}}+a_x^s \right)}{(1-p)x +(1-f) \left(\Delta_{x,W} ^{\text{opt}}+a_x^s \right)}\right)\\
    &= (1-p)  y \cdot f \left(\tfrac{n_3(f,\alpha,s)}{1-f} + \tfrac{n_3(f,\alpha,s)}{1+n_3(f,\alpha,s)}\right)
    \end{align*}
    where 
    \begin{align*}
        &n_3(f,\alpha,s) =\tfrac{\sqrt{(1+\alpha)(1-f)}-3 +\sqrt{\left(\sqrt{(1+\alpha)(1-f)}-1\right)^2 + \frac{4\sqrt{(1+\alpha)(1-f)}}{1-s}}}{2}.
    \end{align*}

    Through a combination, we obtain that the fees collected across both pools are given by
    \begin{align*}
    &F (f,\alpha,s,y,p,\omega)\\ 
    &= \begin{cases} 
      0 &\quad0< \alpha \leq \alpha^{\min}_N\\
      \parbox[t]{.33\linewidth}{$F_N+\omega\cdot F_{W,R}^{U^A_R\leq 0}$}   &\quad \alpha^{\min}_N<\alpha \leq\alpha^{\min}_W  \text{ \& } U^A_R\leq 0\\
      \parbox[t]{.33\linewidth}{$F_N+\omega\cdot F_{W,R}^{U^A_R> 0}$}   &\quad \alpha^{\min}_N<\alpha \leq\alpha^{\min}_W  \text{ \& } U^A_R> 0\\
      \parbox[t]{.33\linewidth}{$F_N+ (1-\omega)F_W^{U^A\leq 0}$\\$+ \omega\cdot F_{W,R}^{U^A_R\leq 0}$}  &\quad \alpha \geq \alpha^{\min}_W  \text{ \& }  U^A_R,U^A_S\leq 0\\
      \parbox[t]{.33\linewidth}{$F_N+ (1-\omega)F_W^{U^A\leq 0}$\\$+ \omega\cdot F_{W,R}^{U^A_R> 0}$}  &\quad \alpha \geq \alpha^{\min}_W  \text{, }  U^A_R>0\text{ \& }U^A_S\leq 0\\
      \parbox[t]{.33\linewidth}{$F_N+ (1-\omega)F_W^{U^A> 0}$\\$+ \omega\cdot F_{W,R}^{U^A_R> 0}$}  &\quad \alpha \geq \alpha^{\min}_W  \text{ \& }  U^A_R,U^A_S>0\\
    \end{cases}
    \end{align*}

    $F (f,\alpha,s,y,p,\omega)$ linearly combines the fees from the two streams (sophisticated and retail) according to their relative proportions (i.e., sophisticated trades make up a proportion $1-\omega$ and retail traders the rest. We conclude that $F$ is proportional to $p$ for every $\alpha>0$.
\end{proof}

Importantly, Lemma~\ref{lem:totalfee} holds an infinite sequence of trade orders from sophisticated and retail traders with the same ($\alpha$,$s$) along with the associated orders from sandwich attackers and arbitrageurs. The previous follows from price arbitrageurs returning the pool to its initial price $P_{X\rightarrow Y}$ after every trade sequence. We conclude that the total fees collected for a continuous stream of trade orders originating from a homogeneous set of traders with the same relative benefit $\alpha$ and slippage tolerance $s$ is proportional to $p$, i.e., the fraction of the total liquidity placed in $\text{Pool}_N$.

Lemma~\ref{lem:totalfee} gives the system's total fees $ F(f,\alpha,s,y,p,\omega)$. By virtue of the proportionality of the total fees $F$ in both $p$ and $y$, the fees received by an individual liquidity provider $LP_i$ with liquidity $(p_i l_i L,(1-p_i) l_i L)$ are given by $F_{i}(f,\alpha,s,y,p_i,l_i,\omega) = l_i \cdot F(f,\alpha,s,y,p_i,\omega)$. Therefore, the optimal liquidity distribution for an individual liquidity provider also has all liquidity in $\text{Pool}_N$ ($p_i=1$) or all liquidity in $\text{Pool}_W$ ($p_i=0$), whenever the gradient of the fee with respect to $p_i$ is non-zero. A liquidity provider will redistribute their liquidity optimally, i.e., such that their utility is maximized, whenever they can increase their received fees by more than a factor of $1+\varepsilon$.

\section{Game Equilibria}\label{sec:equilibria}
Before discussing the quantitative model, we will give an intuitive explanation. First, we note that liquidity providers profit from large trading volumes, irrespective of their origin. As the sandwich attackers extract their profits from the traders, sophisticated traders will reduce their trading volume if the attacks become too lucrative but retail trades will not. Therefore, whether a pool with sandwich attacks is the equilibrium boils down to whether the increased trading volume the attackers generate can offset the diminished trading activity of sophisticated traders.
 
We will now substantiate this qualitative picture by locating the game's $\varepsilon$-equilibria to identify which pool is favored by the market actors. A liquidity distribution is an $\varepsilon$-equilibria if no liquidity provider can increase their utility by more than a factor of $1+\varepsilon$ by adjusting the liquidity distribution. For $\varepsilon=0$, any $\varepsilon$-equilibrium is considered a Nash equilibrium. We analyze the game's equilibria assuming a fixed (mean) benefit for the traders, $\alpha$, in Section~\ref{sec:homogenous} and discuss the heterogeneous case in Section~\ref{sec:HeterogeneousTraders}. Further note that the section's omitted proofs can be found in Appendix~\ref{app:game}.
\subsection{Homogeneous Traders}\label{sec:homogenous}
We start by analyzing the game's equilibria given a homogeneous trader set, i.e., all traders have the same relative benefit $\alpha$. In the simplest case, $\partial _p F =0$, all liquidity distributions are both $\varepsilon$-equilibria and Nash equilibria (cf. Lemma~\ref{lem:strongNash}). 
\vspace{-3pt}

\begin{restatable}[]{lemma}{strongNash}\label{lem:strongNash}
    The only Nash equilibria if $\partial _p F \neq 0$ are $p \in \{0,1\}$. If $\partial _p F =0$, all liquidity distributions are $\varepsilon$-equilibria in a homogeneous traders game.
\end{restatable}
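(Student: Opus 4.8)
The plan is to build everything on the per-provider fee identity recorded just above the lemma: if $LP_i$ places a fraction $p_i$ of its liquidity in $\text{Pool}_N$, it collects $U^{\text{LP}}_i = F_i = l_i\, F(f,\alpha,s,y,p_i,\omega)$. I would first stress what this identity really encodes: $U^{\text{LP}}_i$ depends on $LP_i$'s \emph{own} action $p_i$ only, not on how the remaining liquidity $(1-l_i)L$ is distributed among the other providers. This follows from Lemma~\ref{lem:totalfee}, because the aggregate fees earned in $\text{Pool}_N$ scale proportionally to $p=\sum_j p_j l_j$ while those in $\text{Pool}_W$ scale proportionally to $1-p$; hence $LP_i$'s pro-rata shares $\tfrac{p_i l_i}{p}$ and $\tfrac{(1-p_i) l_i}{1-p}$ of these fees cancel the dependence on the aggregate, leaving a quantity of the form $l_i\big(p_i g_N + (1-p_i) g_W\big)=l_i F(p_i)$ with $g_N,g_W$ (hence $F$'s $p$-dependence) functions of $f,\alpha,s,y,\omega$ alone. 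So the game decouples into $n$ independent one-dimensional maximizations of $p_i\mapsto F(\,\cdot\,,p_i,\,\cdot\,)$ over $[0,1]$.

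Next I would record that $F$ is affine in its $p$-argument on all of $[0,1]$: $F(p)=F_0+p\,\partial_p F$ with $F_0$ the value at $p=0$ and $\partial_p F$ a genuine constant. This is Lemma~\ref{lem:totalfee} together with the observation that the regime boundaries appearing in its proof are themselves $p$-independent --- whether a sandwich attack on retail (resp.\ sophisticated) flow is profitable is decided, via Lemma~\ref{lem:smalldelta}, by comparing $\Delta_{x,W}^{\text{opt}}$ (resp.\ $\delta_{x,W}^{\text{opt}}$) with the threshold $\delta_x^{\min}=\tfrac{f(1-p)x}{(1-f)^2}$, and \emph{both} sides are proportional to $1-p$, so the comparison does not involve $p$. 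Consequently $U^{\text{LP}}_i(p_i)=l_i\big(F_0+p_i\,\partial_p F\big)$ is affine in $p_i$ with slope $l_i\,\partial_p F$.

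The two cases then follow at once. If $\partial_p F\neq 0$: since $l_i>0$, $U^{\text{LP}}_i$ is strictly monotone on $[0,1]$, so its unique maximizer is the endpoint $p_i^{\star}=1$ when $\partial_p F>0$ and $p_i^{\star}=0$ when $\partial_p F<0$ --- the same endpoint for every $i$. In any Nash equilibrium every provider must play $p_i=p_i^{\star}$, which forces the aggregate $p=p_i^{\star}\sum_i l_i=p_i^{\star}\in\{0,1\}$; conversely the uniform profile $p_i\equiv p_i^{\star}$ is a Nash equilibrium, since any unilateral deviation strictly lowers that provider's utility. If $\partial_p F=0$: $U^{\text{LP}}_i(p_i)=l_i F_0$ is constant in $p_i$, so no unilateral change of strategy alters $LP_i$'s utility at all, and in particular none increases it by a factor exceeding $1+\varepsilon$; hence every liquidity distribution is an $\varepsilon$-equilibrium for every $\varepsilon\geq 0$ (and, with $\varepsilon=0$, a Nash equilibrium).

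Almost all of the above is bookkeeping layered on Lemma~\ref{lem:totalfee}; the one step needing real care --- and the one I expect to be the main obstacle --- is the decoupling claim, i.e.\ checking cleanly that $LP_i$'s realized fee is truly a function of $p_i$ alone. That rests on the exact linearity of each pool's aggregate fee in ($p$, resp.\ $1-p$) and on the $p$-independence of the Lemma~\ref{lem:totalfee} regime boundaries; once those are pinned down, the fact that an affine function on $[0,1]$ is maximized uniquely at an endpoint (unless it is constant) does the rest.
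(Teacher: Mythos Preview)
Your proposal is correct and follows the same approach as the paper: both rest on the identity $F_i=l_i\,F(\cdot,p_i,\cdot)$ established just before the lemma and on the affinity of $F$ in $p$ from Lemma~\ref{lem:totalfee}, then conclude that an affine function on $[0,1]$ is maximized at an endpoint unless it is constant. The paper's own proof is a two-line sketch that simply cites the proportionality in $p$; your version is considerably more careful---in particular, your verification that the regime boundaries in Lemma~\ref{lem:totalfee} are $p$-independent (because both the optimal trade size in $\text{Pool}_W$ and the profitability threshold $\delta_x^{\min}$ scale with $1-p$) and your explicit unpacking of the decoupling via pro-rata cancellation are genuine clarifications the paper omits, but they do not constitute a different route.
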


\begin{figure}[t]
\centering
  \begin{subfigure}[t]{1\linewidth}
  \includegraphics[scale=1]{Figures/gradFeeTrade1.png}\vspace{-6pt}
  
    \caption{We set $\omega= 0.01$.} \label{fig:gradFeeTrade1}
  \end{subfigure}

  \begin{subfigure}[t]{1\linewidth}
    \includegraphics[scale=1]{Figures/gradFeeTrade10.png}\vspace{-6pt}
    
    \caption{We set $\omega= 0.1$.} \label{fig:gradFeeTrade10}
  \end{subfigure}\vspace{-4pt}
  \caption{The Nash equilibrium (color shading), is dependent on the slippage tolerance and the relative benefit for $\omega=0.01$ (cf. Figure~\ref{fig:gradFeeTrade1}) and $\omega=0.1$ (cf. Figure~\ref{fig:gradFeeTrade10}). We set $x = 5,000,000$ $X$, $y = 5,000,000$ $Y$ and $f= 0.003$.} \label{fig:NashEq}\vspace{-6pt}
\end{figure}

In Lemma~\ref{lem:strongNash} we further show that for $\partial _p F \neq 0$ the only possible Nash equilibria are the two corner cases: all liquidity in $\text{Pool}_N$ ($p=1$) or in $\text{Pool}_W$ ($p=0$). This follows from the proportionality of the fees to $p$ which means that the sign of $\partial _p F$ dictates the location of the Nash equilibrium. In Figure~\ref{fig:NashEq}, we plot the dependence of this equilibrium on the slippage tolerance and relative benefit for $\omega=0.01$ (i.e., 1\% of the trade flow originates from retail traders) and $\omega=0.1$ (i.e., 10\% of the trade flow originates from retail traders).

We first consider the setting when $\omega=0.01$ (cf. Figure~\ref{fig:gradFeeTrade1}). Notice that in areas where either the trader's relative benefit $\alpha$ or the slippage tolerance $s$ is high, $\text{Pool}_N$ is the Nash equilibrium. When $\alpha$ is comparatively large, so is the traders' transaction size. Liquidity providers, therefore, receive a substantial amount of fees from sophisticated traders and sandwich attacks would decrease the pool's trading volume more than the volume created by the attacker. Thus, all liquidity is in $\text{Pool}_N$. The same holds when the slippage tolerance is high compared to the trader's benefit. Sophisticated traders no longer trade in $\text{Pool}_W$ (cf. Lemma~\ref{lem:tradesizesoph}) or their size in $\text{Pool}_W$ is too small for there to be a profitable sandwich attack (cf. Lemma~\ref{lem:smalldelta}). Thus, the only volume in $\text{Pool}_W$ stems from retail traders (cf. Lemma~\ref{lem:tradesizeretail}). 

There is a small area in between where $\text{Pool}_W$ is the equilibrium. Here, the slippage tolerance is just small enough not to exceed the bound given in Lemma~\ref{lem:tradesizesoph} and the sophisticated trader's transaction size in $\text{Pool}_W$ is just large enough to allow for a profitable attack (cf. Lemma~\ref{lem:smalldelta}). Further, the trades from retail traders also allow for profitable attacks as their trade size exceeds that of sophisticated traders. Thus, liquidity providers' private incentives are maximized in the presence of sandwich attackers. 

When $\omega=0.1$ the picture shifts and $\text{Pool}_W$ is the Nash equilibrium in a larger proportion for the parameter space due to the increase in volume from retail traders  (cf. Figure~\ref{fig:gradFeeTrade10}. For relatively large $\alpha$, $\text{Pool}_W$ dominates as the Nash equilibrium. In this part of our parameter space, the trades from retail and sophisticated traders suffer from sandwich attacks. Thus, for relatively large $\alpha$ the extra volume associated with retail traders, in comparison to the setting in Figure~\ref{fig:gradFeeTrade1}, prevents $\text{Pool}_N$ from being the Nash equilibrium.

In the part of the parameter space where the slippage tolerance is large in comparison to the relative benefit, we observe that in a sliver of the space $\text{Pool}_W$ is the Nash equilibrium. Here, the sandwich attack on the trades from retail traders are profitable. Further, the sandwich attack volume is relatively large in comparison to the volume from sophisticated and retail traders in $\text{Pool}_N$ as the slippage tolerance is large in comparison to the relative benefit. However, when this difference decreases, $\text{Pool}_N$ becomes the Nash equilibrium as the sandwich attack volume can no longer compensate the loss in volume from sophisticated traders.

As $\omega$ grows the parts of the parameter space where $\text{Pool}_W$ is the Nash equilibrium move in on each other. In particular, for $\omega=1$ (i.e., there are only retail traders) $\text{Pool}_W$ is the Nash equilibrium on the entire parameter space and the opposite holds for $\omega=0$ (i.e., there are no retail traders). Thus, the composition of order flow from the two types of traders is a major factor determining where the Nash equilibrium lies. 

While the sign of the gradient $\partial _p F$ dictates the location of the Nash equilibrium, it is not sufficient to determine if it is an $\varepsilon$-equilibrium. As we show in Theorem~\ref{thm:softNash}, it is the relative difference between the fees the liquidity provider earns with their current distribution and the maximum fees they can collect that dictate whether the liquidity provider will change strategy. 
\begin{restatable}[]{theorem}{softNash}\label{thm:softNash}
    A liquidity distribution is an $\varepsilon$-equilibrium if there is no liquidity provider with initial liquidity distribution $(p_i l_i L,(1-p_i) l_i L)$, such that
    $$\frac{\max\{F (f,\alpha,s,y,0,\omega) ,F (f,\alpha,s,y,1,\omega)  \} }{  F \left(f,\alpha,s,y,p_i,\omega\right)}-1<  \varepsilon.$$
\end{restatable}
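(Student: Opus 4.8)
The plan is to reduce the claim, provider by provider, to a one-dimensional optimisation, using two facts that are already available. First, by Lemma~\ref{lem:totalfee} the total fee $F(f,\alpha,s,y,p,\omega)$ is affine in $p$: every fee contribution collected in $\text{Pool}_N$ is proportional to $p$ and every contribution collected in $\text{Pool}_W$ is proportional to $1-p$, so $F(f,\alpha,s,y,p,\omega)=(1-p)\,F(f,\alpha,s,y,0,\omega)+p\,F(f,\alpha,s,y,1,\omega)$ is a convex combination of its two endpoint values, and hence $\max_{q\in[0,1]}F(f,\alpha,s,y,q,\omega)=\max\{F(f,\alpha,s,y,0,\omega),F(f,\alpha,s,y,1,\omega)\}$. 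Second, the fee $U^{\text{LP}}$ earned by $LP_i$ equals $l_i\,F(f,\alpha,s,y,p_i,\omega)$ and, crucially, depends only on $LP_i$'s own split $p_i$, not on how the remaining liquidity is distributed.

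I would first make the second fact precise, including for deviations. Fees inside a pool are distributed pro rata, and by the trade-size Lemmas~\ref{lem:tradesizesoph} and~\ref{lem:tradesizeretail} together with Lemma~\ref{lem:front} every trade, sandwich, and re-balancing volume in a pool is proportional to that pool's reserves; hence the fee revenue generated in $\text{Pool}_N$ when the aggregate fraction of liquidity there is $q$ has the form $c_N\,q$ for a constant $c_N=c_N(f,\alpha,s,y,\omega)\ge 0$, and the revenue in $\text{Pool}_W$ has the form $c_W\,(1-q)$. If $LP_i$ places a fraction $p_i'$ of its liquidity $l_i L$ in $\text{Pool}_N$ and the resulting aggregate fraction there is $q'$, then $LP_i$ collects $\tfrac{p_i' l_i}{q'}\cdot c_N q' = c_N p_i' l_i$ from $\text{Pool}_N$ and, symmetrically, $c_W(1-p_i')l_i$ from $\text{Pool}_W$; the aggregate $q'$ cancels, so $U^{\text{LP}}=l_i\big(c_N p_i' + c_W(1-p_i')\big)=l_i\,F(f,\alpha,s,y,p_i',\omega)$ irrespective of the other providers' choices. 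This is exactly the identity asserted just after Lemma~\ref{lem:totalfee}, now read for an arbitrary split $p_i'$, and it shows a provider's best response is independent of the rest of the distribution.

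The theorem then follows by rearranging the definition of an $\varepsilon$-equilibrium. Fix a liquidity distribution. The current utility of $LP_i$ is $l_i\,F(f,\alpha,s,y,p_i,\omega)$, whereas the largest utility $LP_i$ can obtain by a unilateral change of strategy is, by the second fact and then the first, $l_i\max_{q\in[0,1]}F(f,\alpha,s,y,q,\omega)=l_i\max\{F(f,\alpha,s,y,0,\omega),F(f,\alpha,s,y,1,\omega)\}$. A provider moves only when this exceeds $(1+\varepsilon)$ times its current utility; cancelling $l_i>0$, that condition reads $\tfrac{\max\{F(f,\alpha,s,y,0,\omega),F(f,\alpha,s,y,1,\omega)\}}{F(f,\alpha,s,y,p_i,\omega)}-1>\varepsilon$. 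Hence the distribution is an $\varepsilon$-equilibrium precisely when no provider satisfies this inequality, which is the statement. (In the degenerate case $F\equiv 0$, i.e.\ $\alpha\le\alpha^{\min}_N$ so that no trade occurs in either pool, no fees are ever earned, every distribution is trivially an $\varepsilon$-equilibrium, and the stated condition holds vacuously.)

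The only substantive point is the second fact for deviations: one must check that the per-pool fee revenue is homogeneous of degree one in that pool's reserves — the trader volumes (Lemmas~\ref{lem:tradesizesoph},~\ref{lem:tradesizeretail}), the maximal sandwich input (Lemma~\ref{lem:front}), and the arbitrage trade that returns the pool to $P_{X\rightarrow Y}$ all scale with $p$ (respectively $1-p$) — so that $LP_i$'s pro-rata share makes the aggregate split cancel. Once this cancellation is in hand, the affineness from Lemma~\ref{lem:totalfee} turns each provider's best response into an evaluation at $p\in\{0,1\}$, and the equivalence with the $\varepsilon$-equilibrium condition is pure algebra.
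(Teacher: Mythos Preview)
Your proposal is correct and follows essentially the same route as the paper: use the affineness of $F$ in $p$ from Lemma~\ref{lem:totalfee} (the paper routes this through Lemma~\ref{lem:strongNash}) to locate the best response at an endpoint, identify $LP_i$'s utility with $l_i\,F(\cdot,p_i,\cdot)$, and then unwind the $\varepsilon$-equilibrium definition. If anything, you are more careful than the paper on the one substantive step---your cancellation argument that $LP_i$'s pro-rata share after a unilateral deviation still equals $l_i\,F(\cdot,p_i',\cdot)$ independent of the aggregate split---which the paper asserts but does not spell out.
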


Currently, all liquidity is in markets that allow for sandwich attacks. Therefore, even if the Nash equilibrium is in $\text{Pool}_N$, liquidity providers would have to move their liquidity.

Thus, we also identify market configurations that are $\varepsilon$-equilibria independent of the current liquidity distribution. In this situation, a new market with a front-running protection mechanism would not attract any liquidity even if it were to maximize the liquidity provider's private incentives. The maximum relative change in fees for a  given market configuration is given as $\vert\Delta_F\vert$, where
$\Delta_F = {\partial _p F}/ F_{\min}$ and $ F_{\min}={\min\left(F(p=0),F(p=1)\right)}.$
Independent of a liquidity provider's initial distribution, the relative benefit of switching strategy cannot exceed $\varepsilon$ in case $\vert \Delta_ F\vert <\varepsilon$. The sign of $\Delta_F$ corresponds to the sign of the fee's gradient $\partial _p F$ and therefore indicates the position of the Nash equilibrium. 

We simulate the dependence of $\Delta_F$ on the slippage tolerance and relative benefit in Figure~\ref{fig:feeGradient} for $\omega=0.01$ and $\omega=0.1$. Starting with the setting where $\omega=0.01$ (cf. Figure~\ref{fig:signFee1}). For comparatively large slippage tolerances, the magnitude of $\Delta_F$ is large. Independent of the liquidity in $\text{Pool}_W$, the trading volume from sophisticated traders is either zero when $\alpha <\alpha_W^{\min}$ or relatively small when sandwich attacks are not profitable. Therefore, switching strategies by moving liquidity from $\text{Pool}_W$ to $\text{Pool}_N$ leads to a sizable increase in fees in this part of the parameter space, and we do not expect any $\varepsilon$-equilibria in $\text{Pool}_W$ for this parameter range. 

\begin{figure}[t]
\centering
      \begin{subfigure}[t]{1\linewidth}
  \includegraphics[scale=1]{Figures/signFee1.png}\vspace{-6pt}
  
    \caption{We set $\omega= 0.01$.} \label{fig:signFee1}
  \end{subfigure}

  \begin{subfigure}[t]{1\linewidth}
    \includegraphics[scale=1]{Figures/signFee10.png}\vspace{-6pt}
    
    \caption{We set $\omega= 0.1$.} \label{fig:signFee10}
  \end{subfigure}\vspace{-4pt}
\caption{Simulation of $ \Delta_ F$ across both pools depending on the trader's relative benefit and the slippage tolerance for $\omega=0.01$ (cf. Figure~\ref{fig:signFee1}) and $\omega=0.1$ (cf. Figure~\ref{fig:signFee10}). In blue areas, the Nash equilibrium is $\text{Pool}_N$, in red areas, it is $\text{Pool}_W$. $ \Delta_ F$ is cut off for better visibility and notice that the cutoff is different in the two plots. We set $x = 5,000,000$ $X$, $y = 5,000,000$ $Y$ and $f= 0.003$.} \label{fig:feeGradient}\vspace{-10pt}
\end{figure}

Turning to more realistic areas of the parameter space where the relative benefit is larger than the slippage tolerance, we notice that  $\Delta_F$'s magnitude is small (i.e., $\Delta_F < 0.02$). Thus, all liquidity providers who only change strategies for a relative benefit larger than 2\% would not be inclined to move their liquidity. We follow that any liquidity distribution is an $\varepsilon$-equilibrium for a significant proportion of the parameter space even for small $\varepsilon$. 

For $\omega=0.1$ we observe a different picture (cf. Figure~\ref{fig:signFee10}). Recall, that in general for a higher $\omega$ a larger proportion of the parameter space has $\text{Pool}_W$ (red areas) as the Nash equilibrium. In addition to that, we observe that in general  $\Delta_F$'s magnitude is larger. Only when the relative benefit is very large in comparison to the slippage tolerance is the relative benefit small, i.e., below  2\%. 

Therefore, when $\omega=0.01$ liquidity providers are largely indifferent to whether the market utilizes a front-running protection mechanism, and might require additional financial incentives to migrate their liquidity to pools with front-running protection mechanisms. However, when $\omega=0.1$ the difference between the pools are more pronounced and liquidity providers are more likely to move their liquidity to the pool with higher fee revenue. This, however, is $\text{Pool}_W$ in the most realistic areas of the parameter space.  

\subsection{Heterogeneous Traders}\label{sec:HeterogeneousTraders}

We continue with analyzing where the $\varepsilon$-equilibria fall in a game with heterogeneous traders. We model a trader's relative benefit $\alpha$ as a random variable $A$ with probability mass function $\psi_A(\alpha)$. The game is in an $\varepsilon$-equilibrium for any probability mass function $\psi_A(\alpha)$ that fulfills the condition provided in Theorem~\ref{thm:epsNashdist}. Theorem~\ref{thm:epsNashdist} assumes that the random variable $A$ is discrete. Note, however, that it could be adapted to the continuous case. 
\begin{restatable}[]{theorem}{epsNashdist}\label{thm:epsNashdist}
    A liquidity distribution in a system with heterogeneous traders with distribution $\psi_A(\alpha)$ is an $\varepsilon$-equilibrium if there is no liquidity provider with initial liquidity distribution $(p_i l_i L,(1-p_i) l_i L)$, such that
    
    \begin{gather*}
        \frac {\max\left\{\sum _\alpha \psi_A(\alpha)F (f,\alpha,s,y,0,\omega)  ,\sum _\alpha\psi_A(\alpha)F (f,\alpha,s,y,1,\omega)  \right\}}{ \sum _\alpha \psi_A(\alpha)F \left(f,\alpha,s,y,p_i,\omega\right)  } \\- 1<  \varepsilon.
    \end{gather*}
\end{restatable}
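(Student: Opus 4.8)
The plan is to reduce Theorem~\ref{thm:epsNashdist} to the homogeneous result Theorem~\ref{thm:softNash} by observing that a trader population with benefit distribution $\psi_A$ produces exactly a $\psi_A$-weighted mixture of the homogeneous order streams analysed in Lemma~\ref{lem:totalfee}. I would first argue that the aggregate transaction fees collected across both pools from such a mixed stream equal $\bar F(f,s,y,p,\omega) := \sum_\alpha \psi_A(\alpha)\,F(f,\alpha,s,y,p,\omega)$, with $F$ the single-$\alpha$ fee function of Lemma~\ref{lem:totalfee}. This rests on the price-arbitrageur assumption: after each trade sequence the pool is restored to the fair price $P_{X\rightarrow Y}$, so every trade --- together with its sandwich attack when $U^A>0$ and the closing arbitrage --- faces the same initial reserves regardless of the preceding history, and its fee contribution is precisely the per-trade term computed in the proof of Lemma~\ref{lem:totalfee} for that trade's $\alpha$. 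Hence the contributions are additive over the heterogeneous stream, and weighting each $\alpha$-trade's contribution by its frequency $\psi_A(\alpha)$ yields $\bar F$.

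Next I would lift the two structural properties of $F$ established in Lemma~\ref{lem:totalfee} to $\bar F$. For each fixed $\alpha$ the map $p \mapsto F(f,\alpha,s,y,p,\omega)$ is affine in $p$ (the ``proportional to $p$'' statement) and linear in $y$; a nonnegative weighted sum of affine (resp.\ linear) functions is again affine (resp.\ linear), so $\bar F$ is affine in $p$ and linear in $y$. Therefore $\max_{p\in[0,1]}\bar F(f,s,y,p,\omega) = \max\{\bar F(f,s,y,0,\omega),\bar F(f,s,y,1,\omega)\}$, and, repeating the pro-rata argument used just after Lemma~\ref{lem:totalfee} per $\alpha$, a liquidity provider $LP_i$ with liquidity share $l_i$ placing fraction $p_i$ in $\text{Pool}_N$ collects fees $l_i\,\bar F(f,s,y,p_i,\omega)$. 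The key point inherited from that argument is that ordinary trades, sandwich attacks, and arbitrage trades all scale with the reserves of the pool in which they occur, so the fee rate per unit of pooled liquidity is independent of the overall liquidity distribution; consequently $LP_i$'s revenue depends only on its own $p_i$, and each provider faces a single-agent optimisation over $[0,1]$ whose optimum lies at $p_i\in\{0,1\}$.

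I would then conclude as in Theorem~\ref{thm:softNash}. Holding the other providers fixed, $LP_i$ can attain at most $l_i\max\{\bar F(f,s,y,0,\omega),\bar F(f,s,y,1,\omega)\}$ by moving all its liquidity into the pool with the higher fee rate, against a current revenue of $l_i\,\bar F(f,s,y,p_i,\omega)$. By the definition of an $\varepsilon$-equilibrium, $LP_i$ changes strategy precisely when the former exceeds $(1+\varepsilon)$ times the latter; the common factor $l_i$ cancels, so the distribution is an $\varepsilon$-equilibrium exactly when no provider $LP_i$ has $\max\{\bar F(f,s,y,0,\omega),\bar F(f,s,y,1,\omega)\}/\bar F(f,s,y,p_i,\omega)$ exceeding $1+\varepsilon$; expanding $\bar F$ back into $\sum_\alpha\psi_A(\alpha)F$ gives the inequality in the theorem statement. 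For a continuous benefit variable one replaces $\sum_\alpha\psi_A(\alpha)(\cdot)$ by the corresponding expectation and the same ``a mixture of affine functions is affine'' observation applies verbatim.

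I expect the main obstacle to be the first step: rigorously justifying that the mixed stream's aggregate fees are literally the $\psi_A$-expectation of the homogeneous fees. This requires the individual trade sequences to be mutually independent --- guaranteed by the arbitrageur reset, which hands every trade identical initial reserves --- and it requires checking that the optimal trade sizes of Lemmas~\ref{lem:tradesizesoph} and~\ref{lem:tradesizeretail} and the sandwich-attack profitabilities $U^A_S,U^A_R$ depend only on the trade's own $\alpha$ together with the fixed parameters $f,s,p,x,y$, so that the six-regime case split in the proof of Lemma~\ref{lem:totalfee} (on $\alpha$ relative to $\alpha^{\min}_N$, $\alpha^{\min}_W$, and the signs of $U^A_S$, $U^A_R$) can be applied per $\alpha$ inside the sum. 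Once this additivity is in place, everything else is a mechanical transcription of the proof of Theorem~\ref{thm:softNash} with $\sum_\alpha\psi_A(\alpha)(\cdot)$ inserted throughout.
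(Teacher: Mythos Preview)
Your proposal is correct and follows essentially the same approach as the paper: form the $\psi_A$-weighted sum of the homogeneous fee functions, inherit affinity in $p$ from Lemma~\ref{lem:totalfee}, and then replay Theorem~\ref{thm:softNash}. In fact you supply more justification than the paper does for the additivity step (via the arbitrageur reset and per-$\alpha$ independence of the regime split), which the paper simply asserts.
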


Further, we predict that for most probability distributions, extreme values of the system's fee gradient $\partial _p F$ will be averaged out. To back up this assumption, we simulate the $\Delta_F$ for a two-point distribution in Appendix~\ref{app:hetero}.

\section{Social Incentives and Self-Regulation}

In our model, whenever the derivative of the liquidity provider's utility is positive, i.e., $\partial _p U^{LP}=\partial _pF >0$, the system's Nash equilibrium maximizes social welfare. 

Our analysis demonstrates that without or with very few retail traders, i.e., traders that act irrational, $\text{Pool}_N$ is the Nash equilibrium for the vast majority of the parameter space (cf. Figure~\ref{fig:gradFeeTrade1}). Therefore, markets preventing front-running generally align the private incentives of liquidity providers with the system's incentives. However, liquidity providers are currently in markets without front-running protections. Thus, an innovative DEX preventing front-running attacks must attract liquidity from other markets. Our analysis highlights that even when placing all liquidity in $\text{Pool}_N$ maximizes the private incentives of liquidity providers, the benefit from adjusting a liquidity distribution is often only small. The market, therefore, cannot rely on the inert liquidity providers to revise their liquidity distribution. Therefore, added incentives may be required for the successful adoption of a such market. 

Additionally, when there is a significant proportion of retail traders $\text{Pool}_W$ is the Nash equilibrium more frequently (cf. Figure~\ref{fig:gradFeeTrade1}). This is a consequence of ``irrational'' behavior from retail traders who do not respond to the presence of sandwich attacks -- possibly due to an information asymmetry. 

Thus, for the market to self-regulate it must (1) attract liquidity to novel DEXes, and (2) educate traders. 

\Paragraph{Attracting Liquidity Providers} In the following, we discuss the possibilities of how a new DEX, implementing a front-running prevention scheme, could attract liquidity providers. One possibility would be for the DEX to distribute its native token to liquidity providers as an added incentive. Similar benefits have been distributed at the launch of new DeFi platforms (cf. Sushiswap~\cite{2021sushiswap}). Note here that it would be important for these distributed tokens to cover at least the gas fees required to migrate the liquidity, as otherwise, it is unlikely that it would motivate many. Another possibility would be to directly cover the migration costs. Such offers are utilized by brokerages in traditional finance (cf. Ally~\cite{2022ally}) and could also be implemented by DEXes. Finally, once they reach a certain traction we would expect many liquidity providers to follow. 

\Paragraph{Educating Traders} In our model, the trade volume retail traders direct to the respective pools is proportional to the liquidity available in that pool, thus they completely ignore the effects of sandwich attacks. Further, this behavior is a major driver in the Nash equilibria lying in pools with sandwich attacks. The behavior of retail traders, who are likely unaware of the presence and consequences of the attacks, can be altered through education. Retail traders adjusting their response to sandwich attacks is crucial for the adoption of a prevention mechanism. 

\section{Conclusion}
Our game-theoretical study of the incentives of traders and liquidity providers to adopt a DEX with a new market design preventing front-running attacks shows that when the vast majority of traders ($\approx$99\%) are sophisticated, the private incentives of both traders and liquidity providers generally align the market's social incentives --- eliminating front-running attacks. However, this drastically shifts when the proportion of retail traders increases. Even when retail traders only account for 10\% of the order flow, the private incentives of liquidity providers oppose the market's social incentives, i.e., liquidity providers are generally drawn to pools with sandwich attacks. 

This finding highlights the struggles of eliminating front-running attacks. Even though, the private incentives of sophisticated traders and liquidity providers align (i.e., without retail traders the Nash equilibrium is always in pools without sandwich attacks), a small proportion of traders who act irrationally, completely changes the picture. 

Further, even if the Nash equilibrium is in pools without sandwich attacks there is a further challenge. In the absence of a central authority, market participants must experience a personal benefit for the successful adoption of such a design. The alignment of the liquidity provider's private incentives and the market's social incentives is promising. Yet, our analysis also finds that the increase in the liquidity provider's utility from moving to a market preventing front-running is generally small. Liquidity providers are usually not nimble market participants. Therefore, the prospect of a small utility increase might not suffice.

Successful self-regulation of the market to prevent front-running attacks is likely to not only require intensive education of traders but also additional initial financial incentives to gain the attention of liquidity providers. 


\bibliographystyle{plain}
\bibliography{bibliography}
\clearpage
\appendix

\section{Game Equilibria -- Heterogeneous Traders}\label{app:hetero}

We simulate the $\Delta_F$ for a two-point distribution with the following probability mass function:  
$$\psi_A ( \alpha) = \begin{cases}\frac{1}{2}&{\text{if }}\alpha=\alpha_k^-=\left( 1-\frac{1}{k}\right)\mu_\alpha ,\\\frac{1}{2}&{\text{if }}\alpha=\alpha_k^+ =\left( 1+\frac{1}{k}\right)\mu_\alpha,\end{cases} $$
in Figure~\ref{fig:signFeeDiffC} for $k=10$ (cf. Figure~\ref{fig:signFeeDiffc10}) and $k=3$ (cf. Figure~\ref{fig:signFeeDiffc3}). Note that for these simulations we set $\omega=0.01$, so retail traders account for only 1\% of the order flow. As expected $\Delta_F$ resembles the homogeneous case more closely, when the two points of the distribution are close to each other, i.e., for the higher values of $k$. Note that for $k = \infty$ the two-point distribution becomes a one-point distribution, i.e., the homogeneous case.

We further notice that a more significant area of the parameter space has $\text{Pool}_N$ as the Nash equilibrium when the slippage tolerance is large. Half the traders have a lower relative benefit than $\mu_\alpha$ and, thereby, these sophisticated traders will only execute transactions in $\text{Pool}_W$ for smaller slippage tolerances. We further note that the area of the parameter space, where $\text{Pool}_W$ is the Nash equilibrium grows smaller as $k$ decreases. While there remains a small area of the parameter space that has $\text{Pool}_W$ as the Nash equilibrium for $k=10$ (cf. Figure~\ref{fig:signFeeDiffc10}), this is noticeable smaller for $k=3$ (cf. Figure~\ref{fig:signFeeDiffc3}). 

The particular combination of requirements that must be met for $\text{Pool}_W$ to be the Nash equilibrium when $\omega$ is very small is achieved less frequently as the distance between the two points of the distribution grows. 

\begin{figure}[h!]
\centering
  \begin{subfigure}[t]{1\linewidth}
  \includegraphics[scale=1]{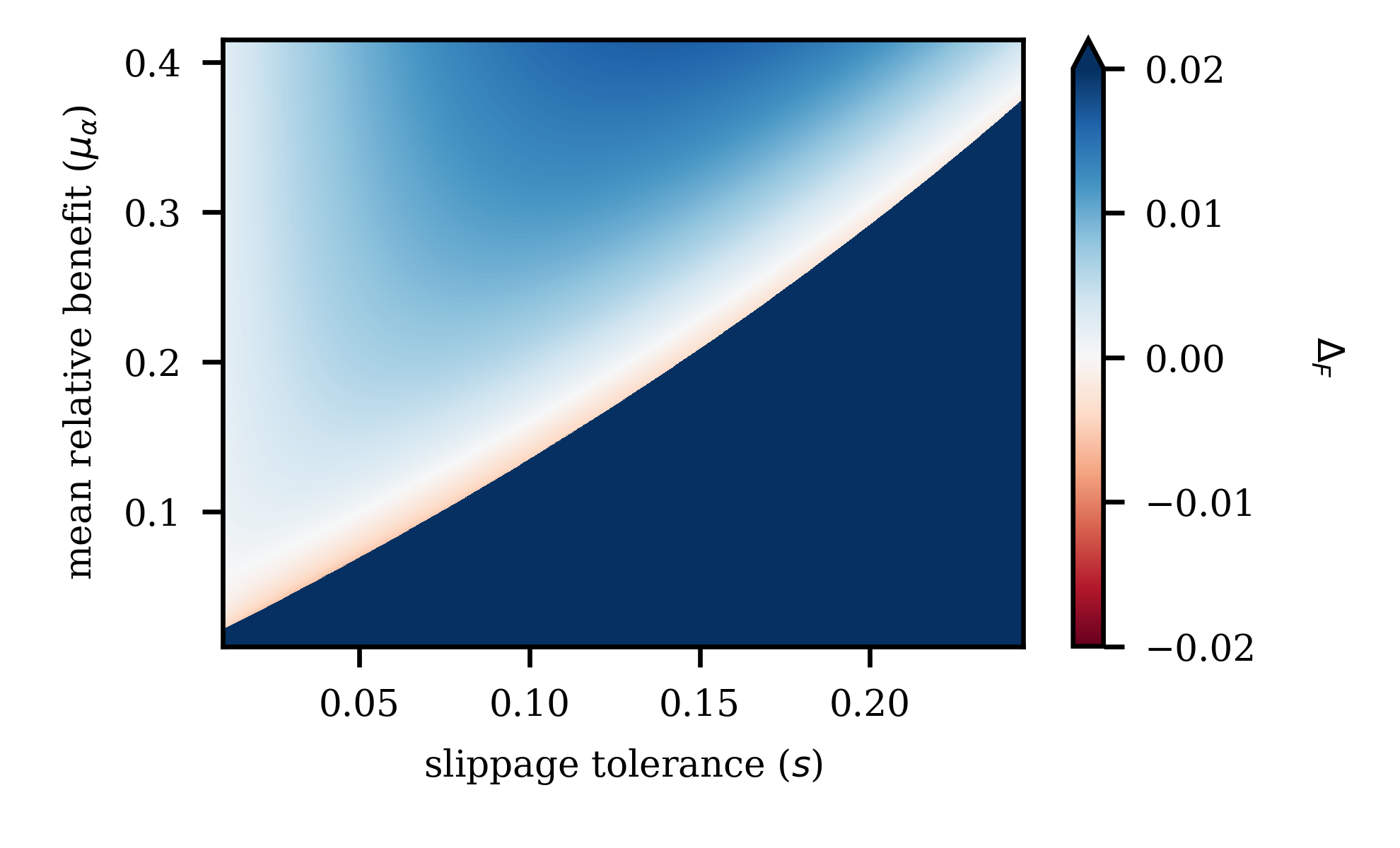}\vspace{-6pt}
  
    \caption{We set $k= 10$.} \label{fig:signFeeDiffc10}
  \end{subfigure}

  \begin{subfigure}[t]{1\linewidth}
    \includegraphics[scale=1]{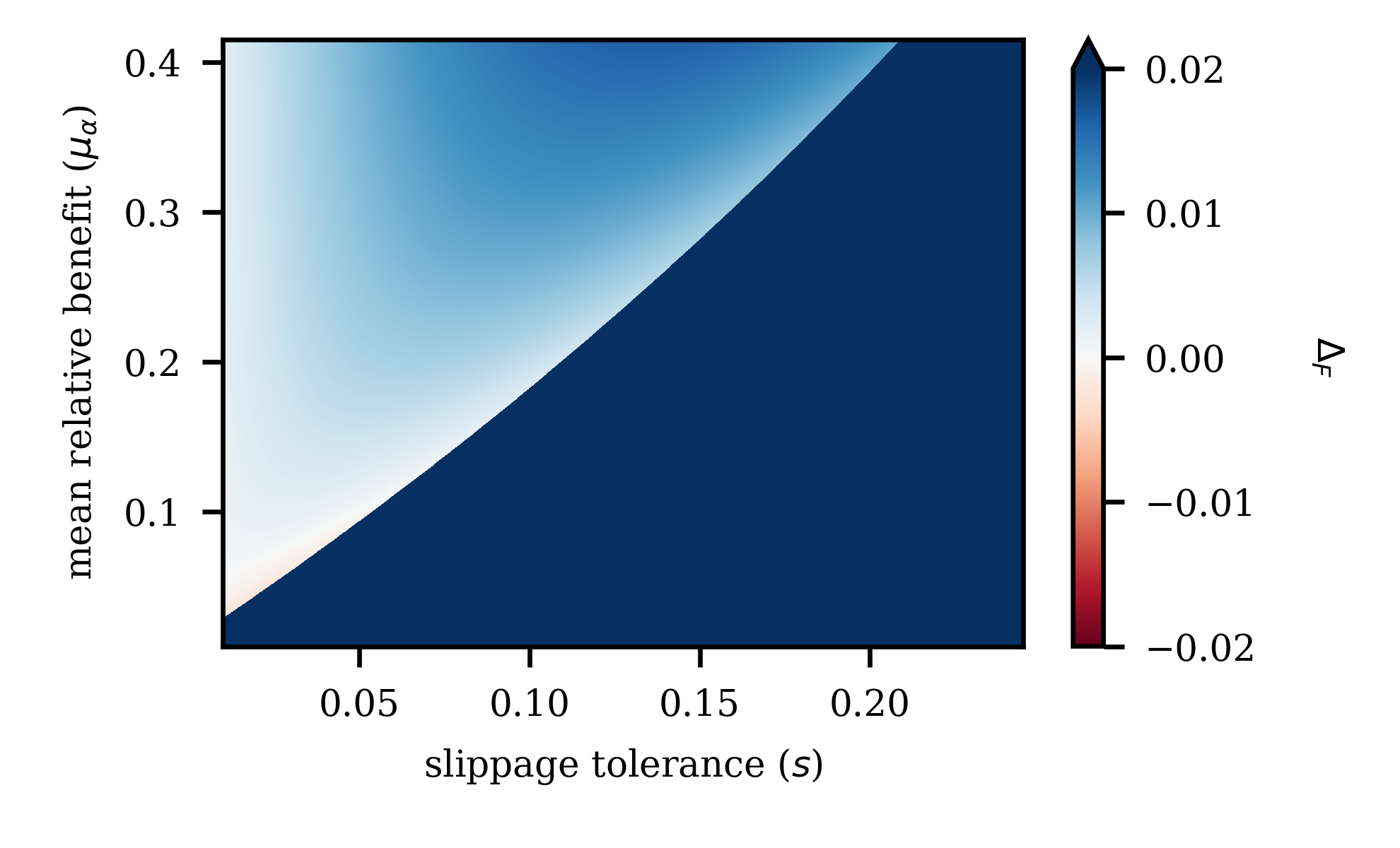}\vspace{-6pt}
    
    \caption{We set $k=3$.} \label{fig:signFeeDiffc3}
  \end{subfigure}\vspace{-4pt}
 \caption{Visualization  of $ \Delta_ F$ across both pools for a heterogeneous trader distribution $\psi_A ( \alpha)$ depending on mean relative benefit and the slippage tolerance. In blue areas the Nash equilibrium is $\text{Pool}_N$, in red areas, it is $\text{Pool}_W$, and in the white area in-between all liquidity distributions are Nash equilibria. $ \Delta_ F$ is cut off at 0.02 for better visibility and the dotted dark blue line visualizes where $\Delta_F = 0.01$. We set $x = 5,000,000$ $X$, $y = 5,000,000$ $Y$, $f= 0.003$ and $\omega=0.01$.} \label{fig:signFeeDiffC}
\end{figure}

\section{Omitted Strategy Proofs}\label{app:strategies}
\subsection{Sandwich Attack Profitability}
\profit*
\begin{proof}
    First, the sandwich attacker swaps $a^{\text{in}}_{x} $ and receives 
    $$a_y = - \int _{(1-p)x}^{(1-p)x+(1-f) a^{\text{in}}_{x}} \frac{-x\cdot y}{\xi^2} d \xi ,$$
    in the front-running transaction $A_F$. Then the trader sells $\delta _{x,W}$ and in return receives 
    $$ \tilde{\delta}_{y,W}  = - \int _{(1-p)x+(1-f) a^{\text{in}}_{x}}^{(1-p)x+(1-f) (a^{\text{in}}_{x}+ \delta_{x,W})} \frac{-x\cdot y}{\xi^2} d \xi .$$ 
    
    Finally, the sandwich attacker uses $a_y$ $Y$-tokens to buy $a^{\text{out}}_x$ $X$-tokens in its back-running transaction $A^B$. Due to the transaction fee $f$ being applied to the input, only $\tilde{a}_y = (1-f) a_y$ of the initially swapped $a_y$ re-enters the pool. Therefore, we write  $$ \tilde{a}_y= \int_{(1-p)x+(1-f) (a^{\text{in}}_{x}+  \delta_{x,W})}^{{(1-p)x+(1-f) (a^{\text{in}}_{x}+ \delta_{x,W})}- a^{\text{out}}_x} \frac{-x\cdot y}{\xi^2} d \xi,$$
    where the sign change in front of the integral is the result of $Y$-assets being returned to the pool. 
    
    The amount of $X$ the attacker holds after the transaction $a^{\text{out}}_x$  can be found by equating the two integrals for $a_y$ and $\tilde{a}_y$, using $\tilde{a}_y=(1-f)a_y$, and solving for $a^{\text{out}}_x$. This yields the profit of the sandwich attacker
    \begin{align*}
        &U^A\\
        &=a^{\text{out}}_x- a^{\text{in}}_x\\
        &=\tfrac{(1 - f)^2 a^{\text{in}}_{x}  ((1-p)x + (1 - f) (a^{\text{in}}_{x}  + \delta_{x,W}))^2}{
 ((1-p)x)^2 + (2 - f) (1 - f) (1-p) x \cdot  a^{\text{in}}_{x} + (1 - f)^3 a^{\text{in}}_{x}  (a^{\text{in}}_{x}  + \delta_{x,W})}- a^{\text{in}}_{x}.
    \end{align*}
\end{proof}

\smalldelta*
\begin{proof}
    The expression for $\delta_x^{\min}$ follows from Lemma~\ref{lem:profit} by solving $U^A=0$. The minimum transaction size for which a profitable sandwich attack exists is obtained by setting $a_x^\text{in}=0$.
\end{proof}

\front* 
\begin{proof} 
    We consider a sandwich attack with initial input $a^{\text{in}}_{x}$ to the front-running transaction. The output of the victim transaction selling $\delta_{x,W}$ becomes
    $$ \tilde{\delta}_{y,W} = - \int _{(1-p)x+(1-f) a^{\text{in}}_{x}}^{(1-p)x+(1-f) (a^{\text{in}}_{x}+ \delta_{x,W})} \frac{-x\cdot y}{\xi^2} d \xi .$$ 
  
    The victim's transaction will, however, only go through, if
    \begin{align*}
        &\tilde{\delta} _{y,W} \geq (1-s) \delta_{y,W} \\
        &- \int _{(1-p)x+(1-f) a^{\text{in}}_{x}}^{(1-p)x+(1-f) (a^{\text{in}}_{x}+ \delta_{x,W})} \frac{-x\cdot y}{\xi^2} d \xi \\ &\geq (1-s) \left(- \int _{(1-p)x}^{x+(1-f)\delta_{x,W}} \frac{-x\cdot y}{\xi^2} d \xi \right).
    \end{align*}
    Thus, the attacker's maximal input $a_x^s$ increases the slippage incurred by the victim to their tolerance, i.e., $\tilde{\delta} _{y,W} = (1-s) \delta_{y,W}$. Solving for $a_x^s$, we find that the maximal input is 
    \begin{align*}
        a_x^s  =& \frac{1}{2}\left( \frac{ \sqrt{\delta_{x,W}^2(1 - f)^2 +\frac{4(1-p)x((1-p)x+\delta_{x,W}(1-f))}{1-s}}}{1-f}\right.\\&\hspace{0.7cm}\left. -\frac{-2(1-p)x }{1-f}-\delta_{x,W}\right).
    \end{align*}
\end{proof}

\subsection{Trade Sizes}

\tradesizesoph* 
\begin{proof}
    Without loss of generality, we maximize the trader's utility in each pool independently and start with $\text{Pool}_N$. The trader's utility in $\text{Pool}_N$ is given by 
    $$U^{T}_N =(1+\alpha)\frac{(1-f)\delta_{x,N}p\cdot y}{(1-f)\delta_{x,N}+p\cdot x }- \frac{y}{x}\delta_{x,N} .$$
    We differentiate the trader's utility in $\text{Pool}_N$, $U^{T}_N$, with respect to the transaction size $\delta_{x,N}$ to find the transaction size $\delta_{x,N} ^{\text{opt}}$ maximizing the trader's utility. We obtain
    \begin{align*}
     \partial _{\delta_{x,N}} U^{T}_N=& \frac{(1+\alpha)(1-f) p^2 \cdot x \cdot y}{(\delta_{x,N} (1-f)+p\cdot x)^2}-\frac{y}{x},
    \end{align*}
    and the two zero crossing of $\partial _{\delta_{x,N}} U^{T}_N$ are:
    \begin{align*}
        p\cdot x(\pm \sqrt{(1+\alpha)(1-f)}-1) /(1-f). 
    \end{align*}
    For our parameters, $x,y,\alpha>0$, $0<f<1$, $0\leq p\leq 1$, the second derivative, $\partial^2 _{\delta_{x,N}} U^{T}_N$, is only negative for the following zero crossing
    \begin{align*}
        \delta_{x,N}^{\max}=p\cdot x (\sqrt{(1+\alpha)(1-f)} -1)/(1-f). 
    \end{align*}
    Thereby, $\delta_{x,N}^{\max}$ maximizes the traders utility. The trader optimally sells $\delta_{x,N} ^{\text{opt}}=\max(0,\delta_{x,N}^{\max})$ in $\text{Pool}_N$. 
    
    We proceed analogously as above for $\text{Pool}_W$ and find the trader the optimally places $\delta_{x,W} ^{\text{opt}}=\max(0,\delta_{x,W}^{\max})$ in $\text{Pool}_W$. In the previous, 
    $$\delta_{x,W}^{\max} =(1-p) x( \sqrt{(1+\alpha)(1-s)(1-f)}-1)/(1-f).$$
    The trade inputs to maximize the trader's utility can, thus, be determined analytically and are given by $\delta_{x,N} ^{\text{opt}}=\max(0,\delta_{x,N}^{\max})$ and $\delta_{x,W} ^{\text{opt}}=\max(0,\delta_{x,W}^{\max})$. 
\end{proof}
\section{Omitted Game Equilibria Proofs}\label{app:game}
\subsection{Homogeneous Traders}

\strongNash*
\begin{proof}
    If the fees gradient is non-zero ($\partial _p F \neq 0$), the maxima is located at either corner point of the interval, as fees are proportional to $p$ (cf. Lemma~\ref{lem:totalfee}),  and the fees gradient is non-zero, the maxima are located at either corner point of the interval.     
    Otherwise, if the fees gradient $\partial _p F$ is zero, the fees across the entire interval are constant. Therefore, all liquidity distributions are $\varepsilon$-equilibria.
\end{proof}

\softNash*
\begin{proof}
    For a liquidity distribution to qualify as an $\varepsilon$-equilibrium, no liquidity provider must see a possibility to increase their expected fees by more than than a factor $1+\varepsilon$ through adjusting their liquidity distribution. Further, we know from Lemma~\ref{lem:strongNash} that a liquidity provider receives the most fees either when all their liquidity is in $\text{Pool}_N$ or all their liquidity is in $\text{Pool}_W$. Thus, the maximum relative increase to an LP's fees, with current liquidity distribution $(p_i l_i L,(1-p_i) l_i L)$, is given as
    $$\frac{\max\{F (f,\alpha,s,y,0,\omega) ,F (f,\alpha,s,y,1,\omega)  \} }{  F \left(f,\alpha,s,y,p_i,\omega\right)}-1.$$
    In case the previous fraction does not exceed $\varepsilon$ for any liquidity provider, the current distribution is a Nash equilibrium. 
\end{proof}

\subsection{Heterogeneous Traders}
\epsNashdist*
\begin{proof}
    We proceed similarly to Theorem~\ref{thm:softNash} and note that as long as no liquidity provider must see a possibility to increase their expected fees by more than than a factor $1+\varepsilon$ through adjusting their liquidity distribution, the configuration is a Nash equilibrium. The fees received by a liquidity provider $LP_i$ are given by 
    $$\sum _\alpha \psi_A(\alpha)F \left(f,\alpha,s,y,p_i,\omega\right),$$
    where $F \left(f,\alpha,s,y,p_i\right)$ is given by Lemma~\ref{lem:totalfee}. We, thus, follow that the fees received by liquidity provider $LP_i$ are also proportional to $p_i$ in the heterogeneous case. Further, it holds that the liquidity provider receives the most fees either when all their liquidity is in $\text{Pool}_N$, all their liquidity is in $\text{Pool}_W$ or the fees are constant for all liquidity distribution. Thus, the maximum relative increase to an LP's fees, with current liquidity distribution $(p_i l_i L,(1-p_i) l_i L)$, is given as
    \begin{gather*}
        \frac {\max\left\{\sum _\alpha \psi_A(\alpha)F (f,\alpha,s,y,0,\omega)  ,\sum _\alpha\psi_A(\alpha)F (f,\alpha,s,y,1,\omega)  \right\}}{ \sum _\alpha \psi_A(\alpha)F \left(f,\alpha,s,y,p_i,\omega\right)  } \\- 1<  \varepsilon.
    \end{gather*}    
    
    In case the previous fraction does not exceed $\varepsilon$ for any liquidity provider, the current distribution is a Nash equilibrium. 
\end{proof}
\end{document}